\title[Modelling climate and weather of 2D LA-SALT Euler-Boussinesq]{Modelling the climate and weather of a 2D Lagrangian-averaged Euler-Boussinesq equation with transport noise}
\author[D. Alonso-Or\'an]{Diego Alonso-Or\'an}
\author[A. Bethencourt de Le\'on]{Aythami Bethencourt de Le\'on}
\author[D. Holm]{Darryl Holm}
\author[S. Takao]{So Takao}
\address{Instituto de Ciencias Matem\'aticas CSIC-UAM-UC3M-UCM, 28049 Madrid, Spain.}
\email{diego.alonso@icmat.es}
\address{Department of Mathematics, Imperial College, London SW7 2AZ, UK. }
\email{ab1113@ic.ac.uk}
\address{Department of Mathematics, Imperial College, London SW7 2AZ, UK. }
\email{d.holm@ic.ac.uk}
\address{Department of Mathematics, Imperial College, London SW7 2AZ, UK. }
\email{st4312@ic.ac.uk}
\theoremstyle{plain}
\newtheorem{theorem}{Theorem}[section]
\newtheorem{definition}[theorem]{Definition}
\newtheorem{lemma}[theorem]{Lemma}
\newtheorem{proposition}[theorem]{Proposition}
\theoremstyle{definition}
\newtheorem{remark}[theorem]{Remark}
\newtheorem{example}[theorem]{Example}
\numberwithin{equation}{section}
\newcommand{\overbar}[1]{\mkern 1.5mu\overline{\mkern-1.5mu#1\mkern-1.5mu}\mkern 1.5mu}
\def\tilde{\widetilde}
\renewcommand\hat{\widehat}
\def\RR{{\mathbb R}}
\def\TT{{\mathbb T}}
\def\LL{{\mathcal L}}
\def \u {\mb{u}}
\def \f {\mb{f}}
\def \U {\mb{U}}
\def \V {\mb{V}}
\def \xxi {\mb{\xi}}
\def \x {\mb{x}}
\def \y {\mb{y}}
\def \X {\mb{X}}
\def \A {\mb{A}}
\def \div {\mathrm{div}}
\def \p{\partial}
\DeclareMathOperator{\diff}{d\!}
\newcommand{\norm}[1]{\left\lVert#1\right\rVert}    
\newcommand\abs[1]{\left|#1\right|}    
\newcommand{\rmd}{{\rm d}}
\newcommand{\scp}[2]{{\big\langle {#1}\, , \, {#2}\big\rangle}}
\newcommand{\Scp}[2]{{\Big\langle {#1}\, , \, {#2}\Big\rangle}}
\newcommand{\SCP}[2]{{\left\langle {#1}\, , \, {#2}\right\rangle}}
\newcommand{\mb}[1]{\mbox{\boldmath{$#1$}}}
\newcommand{\mbs}[1]{\footnotesize{\mbox{\boldmath{$#1$}}}}
\newcommand{\E}[1]{\mathbb{E}\left[{#1}\right]}
\begin{document}

\begin{abstract} 
The prediction of climate change and its impact on extreme weather events is one of the great societal and intellectual challenges of our time. The first part of the problem is to make the distinction between weather and climate. The second part is to understand the dynamics of the fluctuations of the physical variables. The third part is to predict how the variances of the fluctuations are affected by statistical correlations in their fluctuating dynamics. This paper investigates a framework called LA SALT which can meet all three parts of the challenge for the problem of climate change.  As a tractable example of this framework, we consider the Euler--Boussinesq (EB) equations for an incompressible stratified fluid flowing under gravity in a vertical plane with no other external forcing. All three parts of the problem are solved for this case. In fact, for this problem, the framework also delivers global well-posedness of the dynamics of the physical variables and closed dynamical equations for the moments of their fluctuations. Thus, in a well-posed mathematical setting, the framework developed in this paper shows that the mean field dynamics combines with an intricate array of correlations in the fluctuation dynamics to drive the evolution of the mean statistics. The results of the framework for 2D EB model analysis define its climate, as well as climate change, weather dynamics, and change of weather statistics, all in the context of a model system of SPDEs with unique global strong solutions. 

\hfill \today
\end{abstract}
\maketitle
\renewcommand\contentsname{}
\tableofcontents

\section{Introduction}
\paragraph{\bf Background.} To meet the challenge of climate change prediction in practice, one must predict the coarse-grained dynamic changes of an extremely complex atmosphere/ocean system which is only partially observed by using a suite of imperfect theoretical and computational simulation models. This means that predictions of quantities of climate interest may be strongly affected by uncertainty arising from unknown model errors and incomplete knowledge of state variables. In addition, one must assess the impacts of climate change over a wide range of significant temporal and spatial scales. For example, one must predict and understand the seasonal, yearly, decadal, and centennial impacts of climate change for issues ranging from extreme weather events, to sea level rise, and the dynamic distributions of deserts and forests.

\paragraph{\bf Previous approaches.} Deterministic physics characterises the climate change problem as a high-dimensional complex dynamical system with sensitivity to initial conditions on essentially all spatial and temporal scales. To estimate the level of difficulty of the climate change problem, one notes that the turbulence problem falls into this same class of problems. The governing Navier-Stokes equations are known for turbulence, though. The central difficulty of climate change science is that the dynamical equations for the actual climate are unknown. In fact, even the definition of climate is still under discussion in the literature \cite{Bothe2018}.  

As in turbulence theory, the statistical approach to the climate system has been developed in parallel to the deterministic computational approach. This development goes back at least fifty years to the early predictability studies for simplified atmosphere models \cite{Epstein1969, Lorenz1963, Lorenz1965, Lorenz1969, Lorenz1976, Lorenz1995, Lorenz1996}.

In a celebrated unpublished paper \cite{Lorenz1995} Ed Lorenz defined the statistical approach to climate science by quoting the following adage.
\begin{quote}
``Climate is what you expect. Weather is what you get." 
\end{quote}
This adage captures the essence of the problem. Namely, climate science is fundamentally probabilistic. 

In the same unpublished paper \cite{Lorenz1995} Lorenz remarked that:
\begin{quote}
There are many questions regarding  climate whose answers  remain  elusive. For example, there is the question of determinism;  was it somehow inevitable at some earlier time that  the climate now would  be as it actually is?
\end{quote}

To address some of his questions in \cite{Lorenz1995} and particularly to address climate change without giving up determinism, Lorenz postulated the idea of an ``almost intransitive'' dynamical system, as follows.
\begin{quote}
An {\it almost intransitive} system is one that can undergo two or more distinct
types of behaviour, and will exhibit one type for a long time, but not forever.
\end{quote}
Since then, many people have discussed this issue, especially as it has become increasingly urgent. A recent  review appears, e.g., in \cite{DaSt2013}. Lorenz seemed to suggest in  \cite{Lorenz1995} that the expected solution itself could be almost intransitive. Answering this  question would require a deterministic equation for the expected solution.

Lorenz's concept of ``almost intransitivity'' also recalls the concept of \emph{intermittency} discussed in turbulence modelling using the Navier--Stokes (NS) equations, although intermittency is usually regarded at the shorter time scales available for typical turbulence problems. 


Computational simulation of Navier--Stokes turbulence faces a \emph{closure problem}, because it is unable to encompass all of the spatial and temporal scales which develop in the turbulent cascade of energy.  Climate science faces an even more extensive closure problem, if it makes the assumption that the weather and the climate obey the same equations. 
The question then arises, ``Would turbulence modelling approaches apply to the climate, if the climate were defined as simply `what you expect' as a statistical property of a dynamical system?" 

This turbulence question engages another recently developing computational approach in climate/weather numerical simulations. This approach involves the introduction of \textit{stochastic parameterisation}, in which mean quantities of interest do have a precise sense of `expectation' and the remainder at a given instant has a sense of `fluctuation'. For recent reviews of this approach, see, e.g. \cite{BYP2012, Berner-etal-2017,GCF2015}.   In the approach to stochastic parameterisation, the summary conclusion of \cite{BYP2012} is that
\begin{quote}
\textit{a posteriori} addition of stochasticity to an already tuned model is simply not viable.

This in turn suggests that stochasticity must be incorporated at a very basic level within the design of physical process parameterisations and improvements to the dynamical core.
\end{quote}
One approach in line with this conclusion is the SALT (stochastic advection by Lie transport) approach introduced in \cite{Holm2015}. The SALT approach combines stochasticity at the `basic level' of Kelvin's circulation theorem, along with the particle filtering method used for data assimilation. A protocol for applying the SALT approach in data assimilation based on comparing fine scale and coarse scale computational simulations has recently been developed in \cite{CCHOS18a,CCHOS18b}. The rest of the present paper will concentrate on developing a Lagrangian-averaged (LA) version of SALT which was recently proposed in \cite{DrivasHolm2019} and developed further in \cite{DHL2019} for potential use in climate change science.

\subsection*{Aims of the present paper.}
In this paper, we derive a stochastic version of the two-dimensional Euler-Boussinesq fluid system which is non-local in \emph{probability space}, rather than in physical space, in the sense that the expected velocity is assumed to replace the drift velocity in the transport operator for the stochastic fluid flow. This stochastic fluid model is derived by exploiting a novel idea introduced in \cite{DrivasHolm2019}, of applying Lagrangian-averaging (LA) in \emph{probability space} to the fluid equations governed by stochastic advection by Lie transport (SALT) which were introduced in \cite{Holm2015}.  

We follow the LA SALT approach to achieve three results of interest in climate modelling based on the Kelvin circulation theorem for stochastic transport of the Kelvin loop. The three results address the three components of the climate change problem discussed at the outset. First, it answers Lorenz's question about determinism in the affirmative. Namely, by replacing the drift velocity of the stochastic vector field by its expected value, one finds that the expected fluid motion becomes deterministic. This first step leads to the second result of interest in climate change modelling. Namely, it reduces the dynamical equations for the fluctuations to a linear stochastic transport problem with a deterministic drift velocity.  Such problems are well-posed. We prove here that the LA SALT version of the 2D EB problem in a vertical plane possesses global strong solutions. The third result addresses the dynamics of the variances of the fluctuations. This result demonstrates that the variances and higher moments of the fluctuation statistic evolve deterministically, driven by a certain set  of correlations of the fluctuations among themselves. Thus, the first result of the paper makes the distinction between climate and weather for the case at hand. Namely, the LA SALT fluid equations for 2D EB may be regarded as a dissipative system akin to the Navier--Stokes equations for the expected motion (climate) which is embedded into a larger conservative system which includes the statistics of the fluctuation dynamics (weather). 
The second result provides a set of linear stochastic transport equations for predicting the fluctuations (weather) of the physical variables, as they are driven by the deterministic expected motion. The third result produces closed deterministic evolutionary equations for the evolution of the variances and covariances of the stochastic fluctuations and their $p$-th order central moments in certain cases. 

In summary, the 2D EB model system treated here by the LA SALT approach reveals that its statistical properties are fundamentally dynamical. The results of the 2D EB LA SALT model analysis define climate, as well as climate change, weather, and change of weather statistics, all in the context of a model system of SPDEs with unique global strong solutions.

\subsection*{Plan of the paper}$\,$

Section \ref{sec:EBsystem} introduces the 2D EB LA SALT system and computes the dynamics of the expectation and fluctuation components of its solutions, as well as their variances. 

Section \ref{sec:LASDPsystems} computes expectation and fluctuation dynamics for LA SALT equations, as well as their variances, covariances and $p$-th central moments, in a general setting. In general, the dynamics of these statistics for LA SALT does not close. However, the fluctuation statistics for the 2D EB LA SALT system in fact does close and the properties resulting from this closure are discussed in Example \ref{2DEB-example}. 

Section \ref{sec:prelim-anal} describes the analytical setting and explains the approach in obtaining the main result Theorem \ref{main:th:1} of well-posedness of the 2D EB LA SALT system, as proved in Section \ref{sec:well-posedness}, subsection \ref{Thm-1-proof}. Namely, for sufficiently smooth initial conditions $(\u_0,\theta_0) \in H^2(\mathbb T^2, \mathbb R^2) \times H^3(\mathbb T^2, \mathbb R)$,  there exists a unique global strong solution of the 2D LA-SALT EB equations \eqref{LA:SALT:Ito:Bou}.

\subsection*{Acknowledgements}
The authors thank Theodore Drivas, James-Michael Leahy, Dan Crisan and Wei Pan for stimulating and encouraging discussions. DDH is grateful for partial support by the EPSRC Standard Grant EP/N023781/1.
ST acknowledges the Schr\"odinger scholarship scheme for funding during this work. DAO acknowledges financial support from the Spanish Ministry of Economy and Competitiveness, through the “Severo Ochoa Programme for Centres of Excellence in R\&D” (SEV-2015-0554)”. ABdL acknowledges PhD student support from the EPSRC Centre for Doctoral Training, ``Mathematics of Planet Earth". 

\section{The Euler-Boussinesq (EB) fluid system in a vertical plane}\label{sec:EBsystem}

In concert with the idea that the climate should be computed with the same fundamental equations as the weather, this paper addresses a representative model of stratified incompressible flow which is a component of any climate model. Namely, it addresses the familiar Euler-Boussinesq (EB) fluid system in a vertical plane. The issue of global existence of regular solutions of the deterministic Boussinesq model still remains an outstanding open problem. Its SALT version inherits most of the properties of its deterministic counterpart and its local well-posedness has been recently established in \cite{DieAytJNLS}.  We first recall the introduction into 2D EB of stochastic advection by Lie transport (SALT) as discussed in that work. We then apply the Lagrangian averaging (LA) concept in probability space to derive and analyse the LA SALT version of the 2D EB equations.  We establish global well-posedness of the LA SALT EB system and investigate the solution behaviour of this stochastic PDE system. 

We begin with the following question. What is the Kelvin circulation theorem for the 2D EB climate/weather system?

\subsection{What is the Kelvin circulation theorem for the 2D EB climate/weather system?}

The Kelvin circulation theorem is a statement of Newton's Force Law for the motion of distributions of mass on closed material loops $c(\mb{u}_t^L)$, where the subscript $t$ denotes explicit time dependence. By definition, such material loops move with the transport velocity $\mb{u}_t^L$ of the fluid flow. Newton's Force Law states that the time rate of change of the momentum $\mb{P}$ of such a loop of a given mass distribution is equal to the force $\mb{F}$ applied to it. For the fluid situation, this is written as
\begin{align}
\frac{\diff \mb{P}}{\diff t} := \frac{\diff}{\diff t}\oint_{c(\mb{u}_t^L)} \mb{u}_t(\mb x)\cdot {\sf d} \mb x
= \oint_{c(\mbs{u}_t^L)} \mb{f}(\mb x)\cdot {\sf d} \mb x =: \mb{F}
\,.
\label{Kel-forcelaw}
\end{align}
The Kelvin-Newton relation in \eqref{Kel-forcelaw} for loop momentum dynamics apparently involves two kinds of velocity. The first velocity is $\mb{u}_t^L$, which is the velocity of the material masses distributed in the line elements along the moving loop. Since it refers to the fluid parcel transport, the velocity $\mb{u}_t^L$ is a Lagrangian quantity. A second quantity with dimensions of velocity $(\mb{u}_t)$ appears in the integrand of the Kelvin circulation. This quantity is physically the momentum per unit mass, defined in the fixed inertial frame which is required for Newton's force law \eqref{Kel-forcelaw} to be valid. This means that $\mb{u}_t$ is an Eulerian quantity, defined in the fixed frame through which the Lagrangian parcels move at velocity $\mb{u}_t^L$. Mathematically, the momentum per unit mass $(\mb{u}_t)$ is the product of the  inverse of the mass density (which itself is a subset of the advected quantities, $D\subset a$) times the variational derivative at fixed spatial coordinate of the Lagrangian $\ell(\mb{u}_t^L,a)$ in Hamilton's principle with respect to the velocity, $\mb{u}_t^L$. In Euler--Poincar\'e form, this is the Kelvin--Noether theorem of \cite{HMR1998}. Namely, 
\begin{align}
\frac{\diff \mb{P}}{\diff t} 
:= \frac{\diff}{\diff t}\oint_{c(\mb{u}_t^L)}
\frac{1}{D}\frac{\delta \ell(\mbs{u}_t^L,a)}{\delta \mb{u}_t^L}\cdot {\sf d} \mb x
= \oint_{c(\mbs{u}_t^L)} \frac{1}{D} \frac{\delta \ell}{\delta a}\diamond a \cdot {\sf d} \mb x =: \mb{F}
\,,
\label{KelNoether-forcelaw}
\end{align}
where the diamond operation $(\diamond)$ is defined in \cite{HMR1998} and is discussed further in the present context below. 

Note, in the discussion below, when the Lagrangian velocity happens to be equal to the momentum per unit mass, then $\mb{u}_t^L\to \mb{u}_t$ and we shall drop the superscript $L$, although the distinction in their definitions still remains. This slight abuse of notation should cause no confusion, because the transport velocity is a vector field which acts on the momentum per unit mass which, in turn, is the 1-form appearing in the integrand of the Kelvin circulation integral. 
\bigskip

The modelling approach of Stochastic Advection by Lie Transport (SALT) modifies the Kelvin theorem in \eqref{Kel-forcelaw} for deterministic fluids by replacing the transport velocity of the loop $\mb{u}_t^L$ in the deterministic Kelvin theorem by a Stratonovich stochastic vector field ${\rm d} x_t$ whose drift velocity is the same as the Eulerian velocity in the \emph{integrand} of the deterministic Kelvin theorem  \cite{Holm2015},
\begin{align}\label{SALT-Kel}
\oint_{c(\mb u^L_t)} \mb u_t\cdot {\sf d} \mb x
\quad\to\quad 
\oint_{c(\diff x_t)}
\mb u_t\cdot {\sf d} \mb x\,,
\end{align}
where $\diff x_t$ denotes the following stochastic process,
\begin{align}\label{dx-form}
\diff {x_t} := \u^L_t (x_t)\diff t+ \displaystyle\sum_k \mb{\xi}_k (x_t)\circ \diff W_t
\,.
\end{align}

The vector fields $\mb{\xi}_k$ are to be determined from data analysis as in  \cite{CCHOS18b,CCHOS18a}. This paper will work formally, by simply assuming that these vector fields are already known from appropriate data analysis for a given application.

\begin{remark}[Notation temporal ($\diff$\,) vs spatial (${\sf d}$)]
In the literature, the letter $d$ is typically used to denote either (1) stochastic time evolution, or (2) exterior derivative/spacial differential.
To avoid confusion, here we will use the roman font $``\diff"$ to denote the former and the sans serif $``\sf{d}"$ to denote the latter.
\end{remark}

The same stochastic transport velocity ${\diff } x_t$ advects the Lagrangian parcels, which may carry advected quantities $(a)$, such as heat, mass and magnetic field lines, by Lie transport along with the flow, as
 \cite{HMR1998}
 \begin{align}\label{advec-qty}
\diff  {a} + \mathcal{L}_{\diff  {x_t} }a = 0\,.
\end{align}

In this paper, we apply the LA SALT (Lagrangian-averaged SALT) approach proposed in \cite{DrivasHolm2019} and developed in \cite{DHL2019}. The LA SALT approach modifies the SALT Kelvin circulation in \eqref{SALT-Kel} by replacing the drift velocity in the stochastic transport loop velocity in \eqref{dx-form} by its expectation, plus the same noise as in SALT. Namely, cf. equation \eqref{dx-form},
\begin{align}\label{KelThm-form}
\oint_{c({\diff} x_t)} \mb{u}_t  \cdot {\sf d} \mb x
\quad\to \quad
\oint_{c({\diff} X_t )} \mb{u}_t \cdot {\sf d} \mb x\,,
\end{align}
where 
\begin{align}\label{dX-form}
 \diff  X_t := \E{\u^L_t}(X_t) \diff t+ \displaystyle\sum_k {\mb \xi_k (X_t)} \circ \diff W_t
\,.
\end{align}
Since the expectation in \eqref{dX-form} refers to the transport velocity $u^L_t$ of Lagrangian loop in Kelvin's theorem, we refer to this process as probabilistic Lagrangian Average (denoted as LA), reminiscent of the time average at fixed Lagrangian coordinate in the LANS-alpha turbulence model,\cite{Chen-etal1998,Chen-etal1999,Chen-etal1999+,Foiasetal01,Foiasetal02}.
For example, in the Euler fluid case the modified Kelvin theorem reads,
\begin{align}\label{KelThm-Eul}
\diff  \oint_{c\big({\diff} X_t \big)} \mb{u}_t  \cdot {\sf d} \mb x
 =  
\oint_{c\big({\diff} X_t \big)}
\big[ \diff{\mb{u}_t} \cdot {\sf d} \mb x+  \mathcal{L}_{\diff X_t} u_t \big]
= 0 \,,
\end{align}
where $ \mathcal{L}_{ {\diff}X_t}u_t $ denotes the Lie derivative of the one-form 
$u_t = \mb{u}_t \cdot \sf d \mb{x}$ with respect to the vector field $ \diff X_t$ given in equation \eqref{dX-form}. The LA SALT motion equation leading to the modified Kelvin theorem in \eqref{KelThm-Eul} was previously stated along with additional noisy and viscous terms in Lemma 3 of \cite{DrivasHolm2019}. 


In fact, an alternative approach leading to the appearance of the vector field \eqref{dX-form} in a stochastic modification of the SALT Kelvin circulation theorem as in equation \eqref{KelThm-form} and leading to equation \eqref{KelThm-Eul} has also been proposed independently in \cite{Hoch2018}. In \cite{Hoch2018}, this modification was proposed as an analogue for SPDE of the McKean-Vlasov mean field approach for finite dimensional SDE describing Hamiltonian interacting particle systems when the Hamiltonian is independent of the position variables \cite{McKean}. The modification as in equation \eqref{KelThm-form} was applied in \cite{Hoch2018} to derive the Navier-Stokes equations by taking the expectation of the resulting equations.  

The present work will take the work in \cite{DrivasHolm2019} and \cite{Hoch2018} farther, by following the LA SALT (Lagrangian Averaged SALT) approach along the same lines as \cite{DHL2019} in applying expectations of the variations with respect to advected variables in combination with the known semidirect-product structure of the Lie--Poisson Hamiltonian formulation of ideal fluid dynamics. The semidirect-product structure of ideal fluid dynamics is reviewed for example \cite{MR2013, HSS2009}. 

To express the LA SALT equations discussed in \cite{DHL2019}, one may act with the semidirect-product (SDP) Lie--Poisson Hamiltonian matrix operator on the expected values of the variational derivatives of the Hamiltonian. In the absence of advected fluid quantities, the corresponding expected-quantity equations produce a Lie-Laplacian version of the Navier-Stokes equation, which reduces to the Navier--Stokes equation in a special choice of the functions $ \xi^{(k)}=\{(1,0,0)^T,(0,1,0)^T,(0,0,1)^T\}$ for $k=1,2,3$, as discussed in \cite{Hoch2018}. After writing the expectation equations with advected quantities in the SDP Hamiltonian matrix form, one observes that the fluctuation equations comprise a linear transport system which is slaved to the expectation equations whose solutions are deterministic and can be obtained for all time for a certain class of Hamiltonians. This slaving relation enables one to calculate the evolution equations for the local and spatially integrated variances of the fluctuations. This entire process will be pursued in this paper specifically for the LA SALT modification of the two-dimensional Euler--Boussinesq equations for a stratified incompressible fluid in a vertical plane.


\bigskip

\subsection{The LA SALT 2D Euler--Boussinesq equations}
The deterministic Euler--Boussinesq (EB) equations for an incompressible, inviscid 2D fluid flow in a vertical plane under gravity are given by
\begin{equation}\label{Inviscid_Boussinesq}
\left\{
\begin{array}{rl}
\partial_t\u+\left(\u\cdot\nabla\right)\u &=-\nabla p+ g\theta \hat{\y}, \qquad (\x,t)\in \mathbb{T}^2\times\RR^{+},\\
\partial_t\theta +\u\cdot\nabla \theta &= 0, \\
\nabla\cdot\u&=0,
\end{array}
\right.
\end{equation}
where $\u=(u_{1},u_{2})$ is the incompressible vector velocity field, $p$ is the scalar pressure, $g$ is the acceleration due to gravity, $\theta$  corresponds to the temperature, or buoyancy, which is transported by the fluid, and $\hat{\y}$ is the unitary vector field in the vertical direction. The EB equations \eqref{Inviscid_Boussinesq} are fundamental in meteorology. Among other aspects, these equations are used to model the process of front formation. They are considered a fundamental model for the study of large scale atmospheric and oceanic flows, built environment, and dispersion of dense gases \cite{Ped87,Ric07}.  
From a mathematical point of view, the 2D EB equations retain some key features of the well-known Euler and Navier-Stokes equations, as for instance, a vortex stretching mechanism for $\nabla\theta\times\hat{\y}\ne0$. The problem has attracted considerable attention in the PDE community, and local existence results and regularity criteria, as well as numerical experiments, are available, \cite{CanBen,HouLi1,Chae,ElgJeo}. The fundamental issue of whether classical solutions of the 2D incompressible Boussinesq equations can develop finite time singularities remains an outstanding open problem which seems to be out of reach, \cite{Yudo}.
In this paper, we will be dealing with the following LA SALT modification of the deterministic EB system in \eqref{Inviscid_Boussinesq} as a suitable model for predicting EB `climate' dynamics in the sense of Lorenz \cite{Lorenz1995}.

\begin{equation}\label{LA:SALT:Lie:Bou}
\left\{
\begin{array}{rl}
\diff {u}\ +& \mathcal L_{\mathbb E[u]} {u} \diff t + \displaystyle \displaystyle\sum_k \mathcal L_{\xi_k} {u} \circ \diff W_t^k 
\\=& - \,{\sf{d}} \E{ p -  |\u|^2/2 }\, \diff t + g \E{\theta} \mb{\hat{y}} \diff t
\,- gy {\sf{d}} (\theta - \E{\theta}) \diff t
\,,\\ \\
\diff \theta\ +& \mathcal L_{\mathbb E[u]} {\theta} \diff t + \displaystyle \displaystyle\sum_k \mathcal L_{\xi_k} {\theta} \circ \diff W_t^k = 0
\,,\hfill \nabla \cdot \mathbb{E} \left[ \u \right]=0.
\end{array}
\right.
\end{equation}
\begin{remark}[Divergence-free condition on the expectation of the velocity] We note that although the more restrictive divergence-free condition $\nabla \cdot \u = 0$ might seem more natural to consider at first sight than our current condition $\nabla \cdot \mathbb{E} \left[ \u \right]$, it would make equations \eqref{LA:SALT:Lie:Bou} ill-posed. This is due to the presence of the term $\nabla \E{ p -  |\u|^2/2 },$ which imposes the pressure to be deterministic. Further insight into this will be provided once we present our approach for solving equations \eqref{LA:SALT:Lie:Bou}. Here, we simply note that if the expectation in the term $\nabla \E{ p -  |\u|^2/2 }$ is removed, the condition $\nabla \cdot \u = 0$ could be considered.

\end{remark}
In the equations above, we have employed the notation $\mathcal{L}_{\xi_k}$ to indicate Lie derivative along a vector field. As stressed in Subsection \ref{2-1}, the Lie derivative on one-forms 
\[
\mathcal{L}_{\xi} u = \mb{\xi} \cdot \nabla \u + \displaystyle\sum_{j} \u^j \nabla {\mb \xi}^j
\]
is different from the Lie derivative applied to scalar fields $\mathcal{L}_{\xi} \theta = \mb{\xi} \cdot \nabla \theta.$ 

As explained below in Example \ref{EB-Ham-example2.1} the system \eqref{LA:SALT:Lie:Bou} 
can be rewritten in Hamiltonian operator form as 
\begin{align}
\diff 
\begin{bmatrix}
\mu \\ \\  \theta \\ \\ \rho 
\end{bmatrix}
= -
\begin{bmatrix}
 \mathcal{L}_{\Box} \mu &  - \Box  (\nabla \theta)    &  \rho \nabla {\Box}
\\ \\
\Box \cdot (\nabla \theta) & 0 & 0 \\ \\
 \nabla \cdot  (\rho \Box) & 0 & 0
\end{bmatrix}
\begin{bmatrix}
 \E{u}\diff t + \sum_k {\mb \xi_k}\circ \diff W^{(k)}_t
\\ \\ -g y \E{\rho} \diff t \\ \\ \E{ p  - \frac{ |\mb{u}|^2}{2} }\diff t  - g\E{\theta} y \diff t
\end{bmatrix},
\label{Ham-matrix-Boussinesq2-1}
\end{align}
which yields equations \eqref{LA:SALT:Lie:Bou}. 
Upon passing to the It\^o formulation, the LA SALT EB system \eqref{LA:SALT:Lie:Bou} transforms into 
\begin{equation}\label{LA:SALT:Ito:Bou}
\left\{
\begin{array}{rl}
\diff {u} + \mathcal L_{\mathbb E[u]} {u} \diff t 
+  \displaystyle\sum_k \mathcal L_{\xi_k} {u} \, \diff W_t^k 
&= - {\sf{d}} \E{p -  |\mb{u}|^2/2} \, \diff t + g \E{\theta} \mb{\hat{y}} \diff t 
\\ & \quad
- gy {\sf{d}} (\theta - \E{\theta}) \diff t 
+ \displaystyle \frac12 \sum_k \mathcal L_{\xi_k}^2 {u} \diff t, 
\\
\diff \theta + \mathcal L_{\mathbb E[u]} {\theta} \diff t + \displaystyle \displaystyle\sum_k \mathcal L_{\xi_k} {\theta}\diff W_t^k &=  \displaystyle \displaystyle \frac12 \sum_k \mathcal L_{\xi_k}^2 {\theta} \diff t\,,
\end{array}
\right.
\end{equation}
where we denote the composition of Lie derivatives as, for example,  $\mathcal L_{\xi_k}(\mathcal L_{\xi_k} {\theta}) =: \mathcal L_{\xi_k}^2 {\theta}$.

Next, taking expectation at both sides of the equations above yields a deterministic equation for the evolution of the expectations given by

\begin{equation}\label{LA:SALT:Exp:Bou}
\left\{
\begin{array}{rl}
\partial_t \mathbb E[u] + \mathcal L_{\mathbb E[u]} \mathbb E[u] &= -{\sf{d}} \left(\mathbb E[p] - \mathbb E \left[|\u|^2/2\right]\right) + g \mathbb E[\theta] \mb{\hat{y}} +  \displaystyle \displaystyle \frac12 \sum_k \mathcal L_{\xi_k}^2 \mathbb E[u], \\
\partial_t \mathbb E[\theta] + \mathcal L_{\mathbb E[u]} \mathbb E[\theta] &= \displaystyle \displaystyle \frac12 \sum_k \mathcal L_{\xi_k}^2 \mathbb E[\theta].
\end{array}
\right.
\end{equation}
It is straightforward to check that in vorticity form where $\omega=\nabla^{\perp}\cdot \mb u = \hat{\y}\cdot{\rm curl} \mb u$, we have that 
\begin{equation}\label{LA:SALT:Stra:Bou}
\left\{
\begin{array}{rl}
\diff \omega + \mathcal L_{\mathbb E[u]} \omega \diff t + \displaystyle \displaystyle\sum_k \mathcal L_{\xi_k} \omega \,\circ \diff W_t^k &= g \p_{x}\theta \diff t, \\
\diff \theta + \mathcal L_{\mathbb E[u]} {\theta} \diff t + \displaystyle \displaystyle\sum_k \mathcal L_{\xi_k} {\theta} \circ \diff W_t^k &=0.
\end{array}
\right.
\end{equation}
We stress here again that since $\omega$ is a scalar quantity for incompressible planar flow, its Lie derivative is to be understood as $\mathcal{L}_{\xi} \omega = \xi \cdot \nabla \omega.$ The corresponding equation for the expectation is given by
\begin{equation}\label{LA:Vor:Exp:Bou}
\left\{
\begin{array}{rl}
\partial_t \mathbb E[\omega] + \mathcal L_{\mathbb E[u]} \mathbb E[\omega] &= g \mathbb \p_{x} \E\theta+  \displaystyle \frac12 \displaystyle\sum_k \mathcal L_{\xi_k}^2 \mathbb E[\omega], \\
\partial_t \mathbb E[\theta] + \mathcal L_{\mathbb E[u]} \mathbb E[\theta] &=  \displaystyle \displaystyle \frac{1}{2} \sum_k \mathcal L_{\xi_k}^2 \mathbb E[\theta].
\end{array}
\right.
\end{equation}

\section{Lagrangian-averaged (LA) semidirect product systems with transport noise}\label{sec:LASDPsystems}
In subsequent discussions, we will employ the following notations:
\begin{itemize}
\item $M$ is a smooth, orientable manifold,
\item $\rm{Diff}(M)$ denotes the group of diffeomorphsims on $M$,
\item $\mathfrak X(M)$ denotes the set of smooth vector fields on $M$,
\item $\Omega^1(M)$ denotes the set of differential one-forms on $M$,
\item $\rm{Den}(M)$ denotes the set of volume forms (densities) on $M$,
\item $V$ is any tensor field such that $\rm{Diff}(M)$ acts on it from the right (e.g. $V = C^\infty(M,\mathbb R)$ and $\rm{Diff}(M)$ acts on $V$ by composition from the right).
\end{itemize}

\subsection{Poisson structure of fluid equations with advected quantities.} \label{2-1}

We have introduced a class of stochastic partial differential equations (SPDE) for continuum dynamics. This class of equations is Hamiltonian with a Lie--Poisson bracket given by the $L^2$ pairing between $\mathfrak{X}(M) \circledS V$ and its dual \cite{HMR1998} 
\begin{align}
\frac{\diff F}{\diff t}=\{ F, H\} 
= 
-\,\SCP{(\mu,a)}{\left[ \frac{\delta F}{\delta (\mu,a)}\,,\, \frac{\delta H}{\delta (\mu,a)} \right]}_{\mathfrak{X}(M)\circledS V},
\label{LP-Brkt}
\end{align}
where $F,H\in C^\infty(\mathfrak X^*(M) \times V^* \to \mathbb R)$, $\mu \in \mathfrak X^*(M) \cong \Omega^1(M) \otimes \rm{Den}(M)$, $a\in V^*$, ${\delta F}/{\delta (\mu,a)}\in \mathfrak{X}(M) \circledS V$ is the variational derivative (see \cite{marsden1983coadjoint}), and $\mathfrak{X}(M) \circledS V$ denotes the semidirect product Lie algebra of vector fields on $M$ acting on the vector space $V$. The square brackets $[\,\cdot\,,\,\cdot\,]$ denote the adjoint action of the semidirect product Lie algebra $\mathfrak{X}(M) \circledS V$ on itself.

Upon integration by parts, the Lie--Poisson bracket in \eqref{LP-Brkt} may be expressed in terms of a Hamiltonian operator as
\begin{align}
\frac{\diff F}{\diff t}=\{ F, H\}
=
-\int_M 
\begin{bmatrix}
{\delta F}/{\delta \mu} \\  {\delta F}/{ \delta a} 
\end{bmatrix}^T
\begin{bmatrix}
{\rm ad}^*_{\Box}\mu & \Box \diamond a
\\ 
\mathcal{L}_{\Box}a & 0
\end{bmatrix}
\begin{bmatrix}
{\delta H}/{\delta \mu} \\  {\delta H}/{ \delta a}  
\end{bmatrix}
{\sf d}V
\label{Ham-matrix-det}
\end{align}
where ${\rm ad^*} : \mathfrak X(M) \times \mathfrak X^*(M) \rightarrow \mathfrak X^*(M)$ is the coadjoint action, $\mathcal L_u \alpha$ is the Lie derivative of a tensor field $\alpha$ with respect to a vector field $u$, and the diamond operator $\diamond: V\times V^*\to \mathfrak{X}^*(M)$  is defined in terms of the Lie derivative as, 
\begin{align}
\Scp{b\diamond a}{v}_{\mathfrak{X}(M)} := \Scp{b }{- \mathcal{L}_v a}_V 
\,,\label{diamond-def}
\end{align}
where $a\in V^*$ and $b\in V$. 
The definition \eqref{diamond-def} makes the Lie--Poisson bracket skew-symmetric in $L^2$ under integration by parts.

We note that the Lie derivative $\mathcal L$ has different local expressions depending on which type of tensor field it acts on, which we will list below. Let $u \in \mathfrak X(\mathbb R^n)$ for all examples below.
\begin{itemize} 
    \item (Scalar functions) Given a scalar field $f$, we have
    $$\mathcal L_u f = \mb{u} \cdot \nabla f.$$
    \item (Vector fields) If $v \in \mathfrak X(\mathbb R^n)$ is another vector field, then
    $$\mathcal L_u v = \left(\mb{u}\cdot \nabla \mb{v} - \mb{v}\cdot \nabla \mb{u}\right)\cdot \nabla = [u,v] = -{\rm ad}_u v.$$
    \item (One-forms) Given a one-form $\alpha \in \Omega^1(\mathbb R^n)$, the corresponding Lie derivative reads
    $$\mathcal L_{u}\alpha = \left(\mb{u} \cdot \nabla \mb{\alpha} + \sum_{j=1}^n \alpha_j \nabla u^j\right) \cdot {\sf d} \x.$$
    \item (Densities) Given a density $D = \rho {\sf d} x^n \in \Omega^n(\RR^n)$, we have
    $$\mathcal L_{u} D = \div(\rho \u) {\sf d} x^n.$$
    \item (One-form densities) Given a one-form density $\mu = \alpha \otimes \rho\,{\sf d} x^n,$ where $\alpha \in \Omega^1(\mathbb R^n)$ and $\rho\,{\sf d} x^n \in \Omega^n(\RR^n)$, its Lie derivative is given by
    $$\mathcal{L}_u (\alpha \otimes \rho\,{\sf d} x^n) = (\rho\, \mathcal{L}_u \alpha + \div(\rho \u) \alpha) \otimes {\sf d} x^n. $$
    It is well-known that for one-form densities (which are dual under $L^2$ pairing to the Lie algebra of vector fields), the coadjoint representation of the Lie algebra is equivalent to the Lie derivative, i.e., ${\rm ad}^*_u (\alpha \otimes \rho\,{\sf d} x^n) \equiv \mathcal{L}_u (\alpha \otimes \rho\,{\sf d} x^n)$, a fact we will use throughout this paper.
\end{itemize}
We refer the readers to \cite{HMR1998} for further examples of Lie derivatives arising in continuum dynamics and the corresponding expressions for the diamond operator. We also remark that all the previous definitions take the same form on the torus $\mathbb{T}^2.$
\begin{example}[The deterministic 2D Euler-Boussinesq equations] \label{EB-Ham-example2.1}
We recall that the Boussinesq system is given by
\begin{equation}\label{Inviscid_Boussinesq00}
\left\{
\begin{array}{rl}
\partial_t\u+\left(\u\cdot\nabla\right)\u &=-\nabla p + g\theta \hat{\y}, \qquad (\x,t)\in \mathbb{T}^2\times\RR^{+},\\
\partial_t\theta +\u\cdot\nabla \theta &= 0, \\
\nabla\cdot\u&=0,
\end{array}
\right.
\end{equation}
where $\u=(u_{1},u_{2})$ is the incompressible vector velocity field, $p$ is the scalar pressure, $g$ is the acceleration due to gravity, and $\theta$  corresponds to the temperature, which is transported by the fluid. In Lie--Poisson form with $(\mu,\theta,D)$ denoting momentum one-form density, potential temperature, and density respectively, where $\mu (x,t) :=  \mb u \cdot {\sf d} \x \otimes \rho \, {\sf d} x^2,$ $D := \rho \, {\sf d} x^2,$ and the advected potential temperature $\theta = \theta(x,t)$ is understood as a scalar quantity. In the semidirect product formalism presented in \eqref{Ham-matrix-det}, this can be expressed as
\begin{align}
\diff{F} =\{ F, h\}
=
-\bigints_{\mbox{$\Large \mathbb{T}^2$}}
\begin{bmatrix}
{\delta F}/{\delta \mu} \\  {\delta F}/{ \delta \theta}  \\  {\delta F}/{ \delta D}
\end{bmatrix}^T
\begin{bmatrix}
{\rm ad}^*_{\Box} \mu & \Box \diamond \theta & \Box \diamond D
\\ 
\mathcal{L}_{\Box} \theta & 0 & 0 \\
\mathcal{L}_{\Box} D & 0 & 0
\end{bmatrix}
\begin{bmatrix}
{\delta H } / {\delta \mu}
\\  {\delta H}/{ \delta \theta} \\
 {\delta H}/{ \delta D} 
\end{bmatrix}
{\sf d} x^2,
\label{Ham-matrix-Boussinesq}
\end{align}
for Boussinesq Hamiltonian $h$ given in terms of $(\mu,\theta,D)$ by the sum of the kinetic and potential energies, plus a constraint applied by the Lagrange multiplier $p$ (the pressure) which enforces incompressibility
\begin{align}
\begin{split}
h(\mu,\theta,\rho) &=  \int_{\mathbb{T}^2} \left(\frac{1}{2\rho} |\mu|^2  
-  g\rho  \theta y + p (\rho-1) \right) \,{\sf d} x^2 
\\&= \int_{\mathbb{T}^2} \left \langle \mu, u \right \rangle 
- \int_{\mathbb{T}^2} \left(\frac{\rho}{2} |\mb{u}|^2 + g\rho \theta y - p (\rho-1) \right) \,{\sf d} x^2
\,,
\end{split}
\label{Boussinesq-Ham}
\end{align}
so that
\begin{align}
\frac{\delta h}{\delta \mu} = u := \mb u\cdot\nabla
\,,\quad
\frac{\delta h}{ \delta u} = \mu - \rho u = 0
\,,\quad
\frac{\delta h}{ \delta \theta} = -g\rho y
\,,\quad
\frac{\delta h}{\delta \rho} = p - \frac{|\mu|^2}{2\rho^2}- g\theta y.
\label{EBHam-var}
\end{align}
We note that the constraint coming from the Lagrangian multiplier $p$ yielding $\rho = 1$ is only to be imposed once the variations are taken and the final equations derived.
The definitions for the Lie-derivative, diamond, and coadjoint operator ${\rm ad}^*$ have been specified above. We note that these depend on the type of object they are being applied to (i.e. $\mu$ is a one-form density, whereas $\theta$ a scalar, and $D$ a volume form). Upon applying these definitions, we can rewrite \eqref{Ham-matrix-Boussinesq} as 
\begin{align}
\partial_t
\begin{bmatrix}
\mu \\  \theta \\ \rho 
\end{bmatrix}
= -
\begin{bmatrix}
 \mathcal{L}_{\Box} \mu &  - \Box  (\nabla \theta)    &  \rho \nabla {\Box}
\\ 
\Box \cdot (\nabla \theta) & 0 & 0 \\
 \nabla \cdot  (\rho \Box) & 0 & 0
\end{bmatrix}
\begin{bmatrix}
 u
\\  -g\rho y \\ p - |\mb{u}|^2/2 - g\theta y
\end{bmatrix},
\label{Ham-matrix-Boussinesq2}
\end{align}
which yields equations \eqref{Inviscid_Boussinesq00}. 
\end{example}

\subsection{SALT equations.}

The class of Hamiltonian SPDE treated here may be obtained by extending the Hamiltonian function to make it stochastic by adding the $L^2$ pairing of the momentum density $\mu$ with a Stratonovich stochastic process (denoted with the symbol $\circ \diff W_t$) whose spatial correlations are specified by a set of smooth vector fields, $\mb{\xi}_k(\x)$, $k=1,\dots,N$, as in 
\cite{Holm2015}, as
\begin{align}
H(\mu,a) \to \diff {h(\mu,a;\xi_k) }
:= H(\mu,a)\diff t + \displaystyle\sum_k\scp{\mu}{\xi_k}\circ \diff W_t^k
\,.\label{Stoch-Ham}
\end{align}
The Lie--Poisson bracket then yields
\begin{align}
\diff  {F} =\{ F, \diff {h}\}
=
-\int_M 
\begin{bmatrix}
{\delta F}/{\delta \mu} \\  {\delta F}/{ \delta a} 
\end{bmatrix}^T
\begin{bmatrix}
{\rm ad}^*_{\Box}\mu & \Box \diamond a
\\ 
\mathcal{L}_{\Box}a & 0
\end{bmatrix}
\begin{bmatrix}
({\delta H } / {\delta \mu})\diff t + \displaystyle\sum_k\mb{\xi}_k(\x)\circ \diff W_t^k
\\  ({\delta H}/{ \delta a} ) \diff t 
\end{bmatrix}
{\sf d}V.
\label{Ham-matrix-SALT}
\end{align}
These equations describe stochastic advection by Lie transport (SALT)
\cite{Holm2015} and they comprise the basis for a new approach for data analysis, uncertainty quantification and uncertainty reduction by data assimilation using particle filtering \cite{CCHOS18b,CCHOS18a}. By defining the stochastic vector field 
\begin{align}
\diff{x_t} := 
({\delta H } / {\delta \mu})\diff t + \displaystyle\sum_k \mb{\xi}_k(\x)\circ \diff W_t^k
\label{Stoch-Lag-traj}
\end{align}
and recalling that ${\rm ad}^*_{{\rm d}x_t}\mu
= \mathcal{L}_{{\rm d}x_t}\mu,$ 
the SALT equations \eqref{Ham-matrix-SALT} may be rewritten in a compact form as 
\begin{align}
\begin{split}
\diff{\mu} + \mathcal{L}_{\diff{x_t}}\mu 
&= - \frac{\delta H }{\delta a}\diamond a \,\diff t
\,,\\
\diff{a} + \mathcal{L}_{\diff{x_t}}a &= 0
\,.
\end{split}
\label{SALT-adv-form}
\end{align}
The SALT equations in this form have been studied extensively, for example, in 
wave-current interactions \cite{Holm_WCI_JNLS2019}, uncertainty prediction \cite{GH18}, solution properties of stochastic fluid dynamics  \cite{CrFlHo2019, AloBetTak}, and turbulent cascades \cite{HolmTurbulent}, even when the spatial correlations are nonstationary \cite{GH18a}. 

\begin{example}[SALT 2D Euler-Boussinesq system]
The 2D SALT Boussinesq equations are given by
\begin{align}\label{Ham-matrix-Boussinesq-SALT}
\diff  {F} =\{ F, h\}
=
-\bigints_{\mbox{$\Large \mathbb{T}^2$}}
\begin{bmatrix}
{\delta F}/{\delta \mu} \\  {\delta F}/{ \delta \theta}  \\  {\delta F}/{ \delta D}
\end{bmatrix}^T
\begin{bmatrix}
{\rm ad}^*_{\Box} \mu & \Box \diamond \theta & \Box \diamond D
\\ 
\mathcal{L}_{\Box} \theta & 0 & 0 \\
\mathcal{L}_{\Box} D & 0 & 0
\end{bmatrix}
\begin{bmatrix}
{\delta h } / {\delta \mu}
\\  {\delta h}/{ \delta \theta} \\
 {\delta h}/{ \delta D} 
\end{bmatrix}
{\sf d} x^2,
\end{align}
\end{example}
with
\begin{align}\label{Boussinesq-Ham-SALT}
h(\mu,\theta,D) = \int_{0}^t \int_{\mathbb{T}^2} \left(\frac{1}{2\rho} |\mu|^2  +  g \rho \theta y + p (\rho-1) \right) \,{\sf d} x^2 \diff s +  \displaystyle\sum_k\int_{0}^t \int_{\mathbb{T}^2} \left \langle \mu (x,t), \xi_k \right \rangle \, \circ \diff W_s^k
\,,
\end{align}
where $\mu = \rho \mb{u}  \cdot \sf{d} \mb{x} \otimes$ $\diff^2 x$ and $D = \rho \diff^2 x$ giving rise to the SALT 2D Euler--Boussinesq (EB) system
\begin{equation}\label{Inviscid_Boussinesq00-SALT}
\left\{
\begin{array}{rl}   
\diff \u + \u \cdot \nabla \u \diff t + \displaystyle\sum_k \mb{\xi_k} \cdot \nabla u \circ \diff W_t^k + \displaystyle\sum_k u^j \nabla \xi_k^j \circ \diff W_t^k &= -{\sf{d}}(p- |\u|^2/2) \diff t + g\theta \hat{\y} \diff t,\\
\diff \theta + \u \cdot \nabla \theta \diff t + \displaystyle\sum_k \mb{\xi_k} \cdot \nabla \theta \circ \diff W_t^k &= 0, \\
\nabla\cdot\u&=0.
\end{array}
\right.
\end{equation}

We note that the well-posedness of this equation and a blow-up criterion for it were derived in \cite{DieAytJNLS}. In this paper, by considering the Lagrangian-averaged version of \eqref{Inviscid_Boussinesq00-SALT}, we construct the LA SALT 2D EB model, which will turn out to be \emph{globally} well-posed.

\subsection{Lagrangian-averaged (LA) SALT equations.}

Recently a modification of the SALT has been made in \cite{DrivasHolm2019} and analysed in \cite{DHL2019} for 3D stochastic fluid motion. This modification preserves the Lie--Poisson bracket structure of the SALT equations, while replacing the variational derivatives of the Hamiltonian by their expected values, denoted $\mathbb{E}[\,\cdot\,]$, as follows. First, the Lagrangian trajectory equation \eqref{Stoch-Lag-traj} is modified by taking the expectation of the drift velocity, as
\begin{align}
\diff{X_t} := 
\E{\frac{\delta H }{\delta \mu}} \diff t + \displaystyle\sum_k\xi_k(x)\circ \diff W_t^k,
\label{Exp-Lag-traj}
\end{align}
where $H$ is the same Hamiltonian as in the SALT equations. We also take the expectation of the variational derivatives with respect to advected quantities $\E{{\delta H }/{\delta a}}$.

The Poisson operator then yields 
\begin{align}
\diff  {F} =\{ F, \diff {h}\}
=
-\int_M 
\begin{bmatrix}
{\delta F}/{\delta \mu} \\  {\delta F}/{ \delta a} 
\end{bmatrix}^T
\begin{bmatrix}
{\rm ad}^*_{\Box}\mu & \Box \diamond a
\\ 
\mathcal{L}_{\Box}a & 0
\end{bmatrix}
\begin{bmatrix}
\mathbb{E}\left[\frac{\delta H }{\delta \mu}\right] \diff t + \displaystyle\sum_k\xi_k(x)\circ \diff W_t^k
\\  \mathbb{E}\left[\frac{\delta H }{\delta a} \right] \diff t 
\end{bmatrix}
{\sf d}V \,.
\label{Ham-matrix-LASALT}
\end{align}
These equations describe \emph{Lagrangian-averaged} stochastic advection by Lie transport (LA SALT). That is, the Lagrangian path ${\rm d}X_t$ in equation \eqref{Exp-Lag-traj} has been acquired by taking the expectation (averaging in probability space) of the drift velocity of the SALT Lagrangian path \eqref{Stoch-Lag-traj} at \emph{fixed Lagrangian label}.
The SALT equations in advective form \eqref{SALT-adv-form} now become the LA SALT equations, given by
\begin{equation}\label{LASALT-adv-form}
\left\{
\begin{array}{rl}
{\rm d}{\mu} + \LL_{\E{\frac{\delta H }{\delta \mu}}} {\mu} \,\diff t
+ \displaystyle\sum_k \LL_{\xi_k} {\mu} \circ \diff W_t^k
&= - \,\mathbb{E}\Big[ \frac{\delta H}{\delta a}\Big] \diamond { a }\,\diff t
\,,\\
 {\rm d} {a} + \LL_{\E{\frac{\delta H }{\delta \mu}}} {a} \,\diff t
 + \displaystyle\sum_k \LL_{\xi_k} {a} \circ \diff W_t^k  &= 0
\,,
\end{array}
\right.
\end{equation}
with ${\rm d}X_t$ defined in equation \eqref{Exp-Lag-traj}. If there are several advected quantities, one sums over all of them in the diamond term in \eqref{LASALT-adv-form}. Notice that the LA SALT equations in \eqref{Ham-matrix-LASALT} have the same Poisson matrix operator as for the SALT equations in \eqref{Ham-matrix-SALT} and therefore many key features of the Lie-Poisson system are preserved, such as the conservation of Casimirs and Kelvin's circulation theorem (see Remark \ref{kelvin} below).
Thus, between equations \eqref{Ham-matrix-SALT} and \eqref{Ham-matrix-LASALT}, only the variational derivatives of the deterministic parts of the Hamiltonian have been changed to accommodate the differences between Lagrangian trajectories for SALT and LA SALT in equations \eqref{Stoch-Lag-traj} and \eqref{Exp-Lag-traj}.

\begin{remark}[Comparing SALT and LA SALT]\rm
The LA SALT approach applies to the same physical class of equations as for SALT. Following the deterministic route set in \cite{HMR1998}, the class of SALT fluid equations was first derived in \cite{Holm2015} from the symmetry-reduced Lagrangians $\ell(u,a)$ for the Euler--Poincar\'e Hamilton's principle with $\mu=\delta \ell/\delta u$, whose variations were constrained to respect stochastic advection laws in \eqref{SALT-adv-form}. The LA SALT approach modifies the stochastic process $ \diff{x_t}$ for the transport vector field in \eqref{Stoch-Lag-traj} which defines the stochastic Lagrangian trajectory in SALT to become $ \diff{X_t}$ as in \eqref{Exp-Lag-traj}. The Euler--Poincar\'e version of the Lie--Poisson expression of the motion equation in \eqref{LASALT-adv-form} is, 
\begin{align}\label{LASALT-EP}
 \diff{\frac{\delta \ell}{\delta u}} + \mathcal{L}_{ \diff{X_t}} \frac{\delta \ell}{\delta u}  
 = \mathbb{E}\Big[ \frac{\delta \ell}{\delta a}\Big] \diamond a\,\diff t
 \quad\hbox{and}\quad
  {\rm d}a + \mathcal{L}_{ {\rm d}X_t} a = 0
\,.
\end{align}
The comparisons between them can be derived from the relations ${\delta \ell}/{\delta u}=\mu$ and ${\delta \ell}/{\delta a}=-{\delta h}/{\delta a}$ which are obtained from the deterministic Legendre transform from the reduced Lagrangian to the reduced Hamiltonian,
\begin{align}
\diff h(\mu,a) = \scp{\mu}{u} - \ell(u,a)
\,,
\label{Legendre-xform}
\end{align}
and the assumption that the reduced Lagrangian is hyperregular, which almost always holds in continuum mechanics. \hfill $\square$
\end{remark}

\begin{remark}[The Kelvin circulation theorem for LA SALT]\label{kelvin} \rm
In fluid dynamics, the mass density $D {\sf d}^3x$ is always an advected quantity, satisfying 
the continuity equation, which in this case is expressed as,
\begin{align}
\diff {(D {\sf d}^3x)} + \mathcal{L}_{\diff{X_t}}(D {\sf d}^3x)  
= \big(\diff{D} + {\rm div}(\diff{X_t} D)\big) {\sf d}^3x
= 0
\,.
\label{Contin-eqn}
\end{align}
Consequently, if we define the circulation one-form $v= \mb{v}\cdot {\sf d} \mb{x}$ by 
\begin{align}
\mu = \mb{m}\cdot {\sf d} \mb{x} \otimes {\sf d}^3x = \mb{v}\cdot {\sf d} \mb{x} \otimes D {\sf d}^3x
= {v} \otimes D {\sf d}^3x
\,,\label{Circ-1form}
\end{align}
and use the continuity equation \eqref{Contin-eqn},
and then the advective form of the motion equation in \eqref{LASALT-adv-form}, we can write the Kelvin circulation theorem for LA SALT as
\begin{align}
\rmd \oint_{c(\diff{X_t})} \!\!\!{\mb v}\cdot {\sf d} {\mb x}  
= \oint_{c(\diff{X_t})} \!\!\!
\big(\diff +  \mathcal{L}_{\diff{X_t}}\big)({\mb v}\cdot {\sf d}{\mb x})  
= - \oint_{c(\diff{X_t})} \frac{1}{D} 
\mathbb{E}\Big[\frac{\delta H }{\delta a}\Big]\diamond a
\,.\label{Kel-thm-LASALT}
\end{align}
This relation may be proved, for example, by following the corresponding proof of the stochastic Kelvin calculation for SALT in  \cite{de2019implications}. Thus, because the LA SALT modification in \eqref{Exp-Lag-traj} of the SALT transport vector field in \eqref{Stoch-Lag-traj} preserves the Lie--Poisson Hamiltonian structure of SALT, one also acquires the Kelvin circulation theorem for LA SALT in \eqref{Kel-thm-LASALT}. Note that for compressible fluids, the right-hand side of the relation in \eqref{Kel-thm-LASALT} can be nonlinear in the stochastic variables. \hfill $\square$
\end{remark}

\subsection{It\^o solutions of LA SALT dynamics}
The solution behaviour in the It\^o version of LA SALT dynamics has stochastic Lagrangian paths given by \cite{Gardiner}
\begin{align}
\begin{split}
\diff{\widehat{X}_t }
&=
\mb{\widehat{U}}(\mb x,t) \diff t + \displaystyle\sum_k \mb{\xi}_k(\mb x) \diff W_t^k
\\&:= 
\bigg(\mathbb{E}\left[\frac{\delta H }{\delta \mu}\right]
+ \frac12 \displaystyle\sum_k \big(\mb{\xi}_k\cdot\nabla\big)\mb{\xi}_k(\mb x)\bigg) \diff t 
+ \displaystyle\sum_k \mb{\xi}_k(\mb x) \diff W_t^k\,.
\end{split}
\label{Exp-Lag-traj-ito}
\end{align}
and we can re-write equation \eqref{LASALT-adv-form} in It\^o form as
\begin{equation}\label{LASALT-EPlinIto}
\left\{
\begin{array}{rl}
{\rm d}{\mu} + \LL_{\E{\frac{\delta H }{\delta \mu}}} {\mu} \,\diff t
+ \displaystyle\displaystyle\sum_k \LL_{\xi_k} {\mu} \,\diff W_t^k
- \frac{1}{2}  \sum_k \LL_{\xi_k}( \LL_{\xi_k} {\mu}) \diff t 
&= - \,\mathbb{E}\Big[ \frac{\delta H}{\delta a}\Big] \diamond { a }\,\diff t
\,,\\
 {\rm d} {a} + \LL_{\E{\frac{\delta H }{\delta \mu}}} {a} \,\diff t
 + \displaystyle\sum_k \LL_{\xi_k} {a} \, \diff W_t^k
 - \frac{1}{2}  \displaystyle\sum_k \LL_{\xi_k}( \LL_{\xi_k} {a})\,\diff t  &= 0
\,.
\end{array}
\right.
\end{equation}

The  It\^o LA SALT dynamics turns out to be quite different from that of It\^o SALT dynamics. Indeed, fundamental and significant  simplifications occur in the structure of the equations when the drift velocity of SALT is replaced by its expectation in LA SALT.  First, when the expectations of the two LA SALT equations in advective form \eqref{LASALT-adv-form} are written out by taking the expectation on both sides of \eqref{LASALT-EPlinIto}, noting that the It\^o integral vanishes due to the martingale property,
\begin{equation}\label{LASALT-EPX}
\left\{
\begin{array}{rl}
\partial_t \E{\mu} + \LL_{\E{\frac{\delta H }{\delta \mu}}} \E{\mu} 
- \frac{1}{2} \displaystyle\sum_k \LL_{\xi_k}( \LL_{\xi_k} \E{\mu})  
&= - \,\mathbb{E}\Big[ \frac{\delta H}{\delta a}\Big] \diamond \E{ a }
\,,\\
 \partial_t \E{a} + \LL_{\E{\frac{\delta H }{\delta \mu}}} \E{a} - \frac{1}{2}  \displaystyle\sum_k \LL_{\xi_k}( \LL_{\xi_k} \E{a})  &= 0
\,,
\end{array}
\right.
\end{equation}
one realises that these equations provide the \emph{entire history} of the solutions for the expectations $\E{\mu}$ and $\E{a}$ throughout the domain of flow. Once the expectation equations \eqref{LASALT-EPX} have been solved, the equations for the instantaneous stochastic variables \eqref{LASALT-EPlinIto} become \emph{linear} It\^o stochastic transport equations, which are driven by the solutions of equations \eqref{LASALT-EPX}, whose entire history is obtained separately. We note that the coupled system \eqref{LASALT-EPlinIto}+\eqref{LASALT-EPX} is closed provided that the variables $(\mu,a)$ and its corresponding variational derivatives are related linearly, i.e., there exists a linear operator $\Lambda : \mathfrak{X}(M) \circledS V \rightarrow \mathfrak{X}^*(M) \circledS V^*$ such that $\left({\delta H}/{\delta \mu}, {\delta H}/{\delta a}\right) = \Lambda^* (\mu, a)$. For example, $\Lambda$ is a convolution with respect to some kernel.

\begin{remark}[Non-parabolicity of the It\^o equation]
We note that the presence of the second order differential operator $\mu\to- \frac{1}{2}  \sum_k \LL_{\xi_k} \LL_{\xi_k}\mu$ in the It\^o formulation \eqref{LASALT-EPlinIto} does not introduce parabolicity into the equation even though $-\frac{1}{2}  \sum_k \LL_{\xi_k}\LL_{\xi_k}\mu$ reduces to the standard Laplace operator $-\Delta \mu$ when $\xi^{(1)} = \hat{\mb{x}},$ $\xi^{(2)}=\hat{\mb{y}}$ . This feature of It\^o calculus may be understood and demonstrated as follows. First, the It\^o and Stratonovich formulations (\eqref{LASALT-EPlinIto} and \eqref{LASALT-adv-form}, respectively) are equivalent, and the latter is a pure transport equation. Second, an additional term appears in the process of making energy estimates in the It\^o formulation. This is known as the It\^o correction term, and it cancels the a priori  \textit{dissipative} effect of the double Lie derivative. Consequently, although one may expect to show that the initial smoothness of the equations will be preserved, no additional smoothing mechanism is available from the second-order It\^o correction term.
\end{remark}

\subsection{Evolution of the covariance tensor}
We have seen that the expectation of the variables in the LA SALT equation form a closed system. Could we say the same about the covariance? For general semi-direct product LA-SALT systems \eqref{LASALT-adv-form}, the answer is no. However, the covariance for the advected quantities {\em does} always form a closed system as we will show below.






\begin{proposition} \label{covariance-eq}
Let $a_t$ be any tensor field that satisfies the linear stochastic advection equation
\begin{align}\label{a-trans-eq}
    \diff a + \mathcal L_{\E{\frac{\delta H}{\delta \mu}}} a \diff t + \displaystyle\sum_k \mathcal L_{\xi_k} a \circ \diff W_t^k = 0,
\end{align}
and let $A^{(2)} := \mathbb E\left[(a - \mathbb E[a])^2\right]$ be the covariance tensor for the tensor field $a$, where $\bullet^2$ here means taking the tensor product with itself.
Then $A^{(2)}$ satisfies the following PDE:
\begin{align}\label{covar-eq}
\partial_t A^{(2)} + \mathcal L_{\E{\frac{\delta H}{\delta \mu}}}A^{(2)} = \sum_k \left(\frac12 \mathcal L_{\xi_k}^2 A^{(2)} + \left(\mathcal L_{\xi_k} \E{a}\right)^2 \right).
\end{align}
This is closed since $\E{a}$ and $\E{\frac{\delta H}{\delta \mu}}$ are determined by the closed system \eqref{LASALT-EPX}. 
\end{proposition}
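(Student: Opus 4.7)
\textbf{Proof plan for Proposition \ref{covariance-eq}.} The strategy is to derive an SPDE for the fluctuation $a' := a - \mathbb{E}[a]$, then apply the It\^o product rule to $a' \otimes a'$ and take expectations, exploiting the Leibniz rule for the Lie derivative to reassemble the result in terms of $A^{(2)}$. First, I rewrite \eqref{a-trans-eq} in It\^o form, which by the Stratonovich-to-It\^o conversion produces the extra drift $\tfrac{1}{2}\sum_k \mathcal{L}_{\xi_k}^2 a\,\diff t$ (analogous to \eqref{LASALT-EPlinIto}). Taking expectations kills the martingale and yields a deterministic transport-diffusion equation for $\mathbb{E}[a]$ of exactly the form \eqref{LASALT-EPX} for the advected variable. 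Subtracting gives the fluctuation equation
\begin{equation*}
\diff a' + \mathcal{L}_{\E{\frac{\delta H}{\delta \mu}}} a' \,\diff t - \tfrac{1}{2}\textstyle\sum_k \mathcal{L}_{\xi_k}^2 a' \,\diff t + \textstyle\sum_k \mathcal{L}_{\xi_k} a\, \diff W_t^k = 0,
\end{equation*}
in which the martingale part still carries the full field $a$, not $a'$; this distinction will be crucial in the quadratic-variation step.

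Next, I apply the It\^o product rule to $a' \otimes a'$. Writing $u := \mathbb{E}[\delta H / \delta \mu]$ (deterministic) for brevity, the drift contributes $-(\mathcal{L}_u a')\otimes a' - a' \otimes (\mathcal{L}_u a') + \tfrac{1}{2}\sum_k \bigl((\mathcal{L}_{\xi_k}^2 a')\otimes a' + a' \otimes (\mathcal{L}_{\xi_k}^2 a')\bigr)\,\diff t$, the martingale contributes $\sum_k \bigl((\mathcal{L}_{\xi_k} a)\otimes a' + a' \otimes (\mathcal{L}_{\xi_k}a)\bigr) \diff W_t^k$, and the cross-variation contributes $\sum_k (\mathcal{L}_{\xi_k} a)\otimes(\mathcal{L}_{\xi_k} a)\,\diff t$. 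Taking expectations kills the martingale term by the standard martingale property used in \eqref{LASALT-EPX}.

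I then reassemble the resulting expression using the Leibniz rule for the Lie derivative on tensor products, namely $\mathcal{L}_X(b \otimes c) = \mathcal{L}_X b \otimes c + b \otimes \mathcal{L}_X c$ and the iterated identity $\mathcal{L}_{\xi_k}^2(a' \otimes a') = (\mathcal{L}_{\xi_k}^2 a')\otimes a' + 2(\mathcal{L}_{\xi_k}a')\otimes(\mathcal{L}_{\xi_k}a') + a' \otimes (\mathcal{L}_{\xi_k}^2 a')$. These give $\mathbb{E}[(\mathcal{L}_u a')\otimes a' + a' \otimes (\mathcal{L}_u a')] = \mathcal{L}_u A^{(2)}$ and
\begin{equation*}
\tfrac{1}{2}\mathbb{E}\bigl[(\mathcal{L}_{\xi_k}^2 a')\otimes a' + a'\otimes(\mathcal{L}_{\xi_k}^2 a')\bigr] = \tfrac{1}{2}\mathcal{L}_{\xi_k}^2 A^{(2)} - \mathbb{E}\bigl[(\mathcal{L}_{\xi_k}a')\otimes(\mathcal{L}_{\xi_k}a')\bigr].
\end{equation*}
Finally, I decompose the cross-variation term via $\mathcal{L}_{\xi_k}a = \mathcal{L}_{\xi_k}\mathbb{E}[a] + \mathcal{L}_{\xi_k}a'$; since $\mathbb{E}[a'] = 0$ the cross-terms vanish, producing $\sum_k \bigl((\mathcal{L}_{\xi_k}\mathbb{E}[a])^2 + \mathbb{E}[(\mathcal{L}_{\xi_k}a')\otimes(\mathcal{L}_{\xi_k}a')]\bigr)\diff t$.

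The decisive observation is that the $\mathbb{E}[(\mathcal{L}_{\xi_k}a')\otimes(\mathcal{L}_{\xi_k}a')]$ terms generated by the Leibniz expansion and by the cross-variation appear with opposite signs and cancel exactly. This cancellation is precisely what makes \eqref{covar-eq} a closed equation for $A^{(2)}$: no dependence on higher-order fluctuation moments survives. The remaining terms are $\mathcal{L}_u A^{(2)}$, $\tfrac{1}{2}\sum_k \mathcal{L}_{\xi_k}^2 A^{(2)}$, and $\sum_k (\mathcal{L}_{\xi_k}\mathbb{E}[a])^2$, which is \eqref{covar-eq}. Closure follows because $\mathbb{E}[a]$ and $u = \mathbb{E}[\delta H/\delta \mu]$ are determined by the decoupled deterministic system \eqref{LASALT-EPX}. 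The main subtlety (not really an obstacle) is bookkeeping: one must keep track of the fact that the It\^o noise is driven by $\mathcal{L}_{\xi_k} a$ rather than $\mathcal{L}_{\xi_k} a'$ so that the cross-variation supplies the $(\mathcal{L}_{\xi_k}\mathbb{E}[a])^2$ forcing, while consistently applying the tensor Leibniz rule to turn drift terms into Lie derivatives acting on $A^{(2)}$.
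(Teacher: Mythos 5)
Your proposal is correct and follows essentially the same route as the paper's proof: derive the fluctuation equation for $a' = a - \E{a}$ (noting that the martingale part carries $\mathcal{L}_{\xi_k}a$ rather than $\mathcal{L}_{\xi_k}a'$), apply the product rule to $a'\otimes a'$, use the tensor Leibniz rule to reassemble $\mathcal{L}_{\xi_k}^2 A^{(2)}$, split $a = a' + \E{a}$ in the quadratic term, and observe that the $\E{(\mathcal{L}_{\xi_k}a')\otimes(\mathcal{L}_{\xi_k}a')}$ contributions cancel after taking expectations. The only (cosmetic) difference is that you convert to It\^o form before applying the product rule, whereas the paper applies the Stratonovich product rule first and performs the It\^o conversion on the resulting stochastic integral; the algebra and the decisive cancellation are identical.
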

\begin{proof}[Proof of Proposition \ref{covariance-eq}]
Let $a' := a - \E{a}$ be the fluctuation about the mean, which can be shown using \eqref{a-trans-eq}-\eqref{LASALT-EPX} to satisfy
\begin{align} \label{a-fluctuation-eq}
    \diff a' + \mathcal L_{\E{\frac{\delta H}{\delta \mu}}} a' \diff t + \displaystyle\sum_k \mathcal L_{\xi_k} a \circ \diff W_t^k = -\frac12 \displaystyle\sum_k \LL_{\xi_k}^2 \E{a} \diff t.
\end{align}
Then by It\^o's product rule, we have
\begin{align*}
    \diff \, (a')^2 &= (\circ \diff a') \otimes a' + a' \otimes (\circ \diff a') \\
    &= -\mathcal L_{\E{\frac{\delta H}{\delta \mu}}} a' \otimes a' \diff t - \displaystyle\sum_k \mathcal L_{\xi_k} a \otimes a' \circ \diff W_t^k - \frac12 \displaystyle\sum_k \mathcal L_{\xi_k}^2 \E{a} \otimes a' \diff t \\
    &\quad\, - a' \otimes \mathcal L_{\E{\frac{\delta H}{\delta \mu}}} a'  \diff t - \displaystyle\sum_k  a' \otimes \mathcal L_{\xi_k} a \circ \diff W_t^k - \frac12 \displaystyle\sum_k a' \otimes \mathcal L_{\xi_k}^2 \E{a} \diff t,
\end{align*}
and using the Leibniz property of the Lie derivative, i.e., $\LL(S \otimes T) = \LL S \otimes T + S \otimes \LL T$, for any tensors $S$ and $T$, we have
\begin{align}
    &\diff \,(a')^2 + \mathcal L_{\E{\frac{\delta H}{\delta \mu}}} (a')^2 \diff t + \frac12 \displaystyle\sum_k \left(\mathcal L_{\xi_k}^2 \E{a} \otimes a' + a' \otimes \mathcal L_{\xi_k}^2 \E{a}\right) \diff t \nonumber\\
    &= -\displaystyle\sum_k \left(\mathcal L_{\xi_k} a \otimes a' + a' \otimes \mathcal L_{\xi_k} a\right) \circ \diff W_t^k \nonumber\\
    &= \frac12 \displaystyle\sum_k \left(\mathcal L_{\xi_k}^2 a \otimes a' + 2\left(\mathcal L_{\xi_k} a\right)^2 + a' \otimes \mathcal L_{\xi_k}^2 a\right) \diff t - \displaystyle\sum_k \left(\mathcal L_{\xi_k} a \otimes a' + a' \otimes \mathcal L_{\xi_k} a\right) \diff W_t^k \nonumber\\
    &= \frac12 \displaystyle\sum_k \underbrace{\left(\mathcal L_{\xi_k}^2 a' \otimes a' + 2\left(\mathcal L_{\xi_k} a'\right)^2 + a' \otimes \mathcal L_{\xi_k}^2 a'\right)}_{\hbox{$= \LL_{\xi_k}(\mathcal L_{\xi_k} (a')^2)$}} \diff t  + \sum_k\left(\mathcal L_{\xi_k} \E{a}\right)^2 \diff t -\displaystyle\sum_k \left(\mathcal L_{\xi_k} a \otimes a' + a' \otimes \mathcal L_{\xi_k} a\right) \diff W_t^k \nonumber\\
    &\quad  +\frac12 \displaystyle\sum_k \left(\mathcal L_{\xi_k}^2 \E{a} \otimes a' + 2\left(\mathcal L_{\xi_k} \E{a} \otimes \mathcal L_{\xi_k} a' + \mathcal L_{\xi_k} a'  \otimes \mathcal L_{\xi_k} \E{a} \right) + a' \otimes \mathcal L_{\xi_k}^2 \E{a}\right) \diff t,
\label{a'-squared-eq}
\end{align}
where in the second equality we converted from Stratonovich to It\^o integral (see appendix \ref{strat-ito-correction}) and in the last equality, we expanded the Stratonovich-to-It\^o correction term using $a = a' + \E{a}$ and the linearity of Lie derivatives. Taking expectations on both sides of \eqref{a'-squared-eq} and noting that (1) the expectation of the It\^o integral vanishes by the martingale property, and (2) $\E{a'} = 0$ by definition, we obtain
\begin{align*}
    \partial_{t} A^{(2)} + \mathcal L_{\E{\frac{\delta H}{\delta \mu}}}A^{(2)} = \displaystyle\sum_k \left(\frac12 \mathcal L_{\xi_k}^2 A^{(2)} + \left(\mathcal L_{\xi_k} \E{a}\right)^2 \right),
\end{align*}
as expected, where $A^{(2)} = \E{(a')^2}$.
\end{proof}

The covariance for the $\mu$ variable in \eqref{LASALT-adv-form} is unlikely to form a closed equation in general due to presence of the coupling term $\E{\frac{\delta H}{\delta a}} \diamond a$, however in the special example of the 2D Boussinesq equation, this is indeed possible as we will illustrate in the next example.

\begin{example}[Covariance of 2D LA-SALT Boussinesq]\label{2DEB-example}
Let us consider the special case of 2D LA-SALT Boussinesq system \eqref{LA:SALT:Ito:Bou}. Letting $u' := u - \E{u}$ and $\theta' := \theta - \E{\theta}$, we have the following equations for the fluctuations
\begin{equation}\label{theta-prime-eq}
\left\{
\begin{array}{rl}
\diff u' + \mathcal L_{\E{u}} u' \diff t + \displaystyle\sum_{k}\mathcal L_{\xi_k} u \circ \diff W_t^k &= -\displaystyle\sum_{k}\frac12 \mathcal L_{\xi_k}^2 \E{u} \diff t - g y {\sf d} \theta' \diff t, \\
    \diff \theta' + \mathcal L_{\E{u}} \theta' \diff t + \displaystyle\sum_{k}\mathcal L_{\xi_k} \theta \circ \diff W_t^k &= -\displaystyle\sum_{k}\frac12 \mathcal L_{\xi_k}^2 \E{\theta} \diff t. 
\end{array}
\right.
\end{equation}

Then by similar arguments as in the proof of Proposition \ref{covariance-eq}, we can show that the covariance fields satisfy the following PDEs
\begin{equation}\label{theta-var-eq}
\left\{
\begin{array}{rl}
    \partial_{t} U^{(2)}+ \mathcal L_{\E{u}}U^{(2)} &= \displaystyle\sum_k \left(\frac12 \mathcal L_{\xi_k}^2 U^{(2)} + \left(\mathcal L_{\xi_k} \E{u}\right)^2\right) - gy \E{u' \otimes {\sf d} \theta' + {\sf d} \theta' \otimes u'},  \\
    \partial_{t}\Theta^{(2)}+ \mathcal L_{\E{u}}\Theta^{(2)} &= \displaystyle\sum_k \left(\frac12 \mathcal L_{\xi_k}^2\Theta^{(2)} + \left(\mathcal L_{\xi_k} \E{\theta}\right)^2\right),
\end{array}
\right.
\end{equation}
where $U^{(2)} := \E{(u')^2}$ and $\Theta^{(2)} := \E{(\theta')^2}$. Clearly, this system is not closed due to the presence of the term $\E{u' \otimes {\sf d} \theta' + {\sf d} \theta' \otimes u'}$ in the $U^{(2)}$ equation. However, applying the exterior derivative $\sf d$ on both sides of the $\theta$-equation and its corresponding fluctuation \eqref{theta-prime-eq}, and noting that the exterior derivative and the Lie derivative commute as a consequence of Cartan's formula, we obtain the following system for $\sf{d}\theta$ and $\sf{d}\theta'$:
\begin{equation}
    \left\{
    \begin{array}{rl}
&\diff \,({\sf d} \theta) + \mathcal L_{\E{u}} {\sf d} \theta \diff t + \displaystyle\sum_k\mathcal L_{\xi_k} {\sf d} \theta \circ \diff W_t^k = 0, \\
&\partial_{t}\E{{\sf d} \theta} + \mathcal L_{\E{u}} {\sf d} \E{ \theta} = \frac12 \displaystyle\sum_k \mathcal L_{\xi_k}^2 \E{{\sf d}\theta},\\
&\diff\, ({\sf d} \theta') + \mathcal L_{\E{u}} {\sf d} \theta' \diff t + \displaystyle\sum_k\mathcal L_{\xi_k} {\sf d} \theta \circ \diff W_t^k = -\frac12\displaystyle\sum_k \mathcal L_{\xi_k}^2 \E{{\sf d}\theta} \diff t.
\end{array}
\right.
\end{equation}

By Proposition \ref{covariance-eq}, the covariance for $\sf{d}\theta$ evolves as
\begin{align}
    \partial_{t}({\sf{d}}\Theta^{(2)})+ \mathcal L_{\E{u}}{\sf{d}}\Theta^{(2)} = \displaystyle\sum_k \left(\frac12 \mathcal L_{\xi_k}^2 {\sf{d}}\Theta^{(2)} + \left(\mathcal L_{\xi_k} \E{{\sf{d}}\theta}\right)^2\right),\label{d-theta-var-eq}
\end{align}
where ${\sf{d}}\Theta^{(2)} := \E{(\sf{d}\theta')^2}$. We show that obtaining an equation for $\E{u' \otimes {\sf d} \theta' + {\sf d} \theta' \otimes u'}$ closes the system \eqref{theta-var-eq}.

By the stochastic product rule, we have
\begin{align}
    &\diff \,(u' \otimes {\sf d}\theta') = u' \otimes (\circ \diff \,({\sf d} \theta')) + (\circ \diff u') \otimes {\sf d} \theta'\nonumber \\
    &=-\mathcal L_{\E{u}}(u' \otimes {\sf d}\theta') \diff t - \left(u'\otimes \mathcal L_{\xi_k}({\sf d}\theta) + \mathcal L_{\xi_k} u \otimes {\sf d}\theta'\right)\circ \diff W_t^k \nonumber\\
    &\qquad - \frac12 \left(u' \otimes \mathcal L_{\xi_k}^2 \E{{\sf d}\theta} + \mathcal L_{\xi_k}^2 \E{u} \otimes {\sf d}\theta'\right) - gy ({\sf{d}}\theta')^2\diff t \nonumber\\
    &=-\mathcal L_{\E{u}}(u' \otimes {\sf d}\theta') \diff t - \left(u'\otimes \mathcal L_{\xi_k}({\sf d}\theta) + \mathcal L_{\xi_k} u \otimes {\sf d}\theta'\right) \diff W_t^k \nonumber\\
    &\qquad - \frac12 \left(u' \otimes \mathcal L_{\xi_k}^2 \E{{\sf d}\theta} + \mathcal L_{\xi_k}^2 \E{u} \otimes {\sf d}\theta'\right) - gy ({\sf{d}}\theta')^2\diff t \nonumber \\
    &\qquad \qquad + \frac12 \left(u' \otimes \mathcal L_{\xi_k}^2 ({\sf d} \theta) + 2(\mathcal L_{\xi_k} u) \otimes \left(\mathcal L_{\xi_k} ({\sf d} \theta)\right) + \mathcal L_{\xi_k}^2 u \otimes {\sf d} \theta'\right) \diff t, \label{omega-theta-prime}
\end{align}
By the Leibniz property of Lie derivatives, we have
\begin{align*}
    &\mathcal L_{\xi_k} \mathcal L_{\xi_k} \left(u' \otimes {\sf{d}}\theta'\right) = \mathcal L_{\xi_k} \left(u' \otimes \mathcal L_{\xi_k}({\sf{d}}\theta') +\mathcal L_{\xi_k}(\omega') \otimes {\sf{d}}\theta'\right) \\
    &= u' \otimes \mathcal L_{\xi_k}^2 ({\sf d} \theta') + 2(\mathcal L_{\xi_k} u') \otimes \left(\mathcal L_{\xi_k} ({\sf d} \theta')\right) + \mathcal L_{\xi_k}^2 \omega' \otimes {\sf d} \theta'.
\end{align*}

Now using the above expression and taking expectations on both sides of \eqref{omega-theta-prime} give us the PDE:
\begin{align*}
    \begin{split}
   \partial_{t} \E{u' \otimes \sf d\theta'} + \mathcal L_{\E{u}}\E{u' \otimes {\sf d}\theta'}
    &= \frac12 \mathcal L_{\xi_k}^2 \E{u' \otimes  {\sf{d}}\theta'} \\
    &\qquad + \left(\mathcal L_{\xi_k} \E{u}\right) \otimes \left(\mathcal L_{\xi_k} \E{{\sf d} \theta}\right) - gy\E{({\sf{d}}\theta')^2}.
    \end{split}
\end{align*}
Similarly, we get an equation for $\E{{\sf{d}}\theta' \otimes u'}$ and combining them gives us an equation for $\E{u' \otimes {\sf d} \theta' + {\sf d} \theta' \otimes u'}$, which reads
\begin{align}\label{cross-variance-eq}
    \begin{split}
    &\partial_{t} \E{u' \otimes \sf d\theta' + {\sf d} \theta' \otimes u'} + \mathcal L_{\E{u}}\E{u' \otimes {\sf d}\theta' + {\sf d} \theta' \otimes u'}
    = \frac12 \mathcal L_{\xi_k}^2 \E{u' \otimes  {\sf{d}}\theta' + {\sf d} \theta' \otimes u'} \\
    &\qquad + \left(\mathcal L_{\xi_k} \E{u}\right) \otimes \left(\mathcal L_{\xi_k} \E{{\sf d} \theta}\right) + \left(\mathcal L_{\xi_k} \E{{\sf d} \theta}\right) \otimes \left(\mathcal L_{\xi_k} \E{u}\right) - 2gy\E{({\sf{d}}\theta')^2}.
    \end{split}
\end{align}

Since the last term  $\E{({\sf{d}}\theta')^2}$ is just the covariance tensor ${\sf{d}}\Theta^{(2)}$, which we can solve for, we conclude that equations \eqref{theta-var-eq},\eqref{d-theta-var-eq} and \eqref{cross-variance-eq} form together a closed system for the covariance of the fields $(u, \theta)$ in the 2D LA-SALT Boussinesq system.
\end{example}

\begin{remark}
By having a closed system of PDEs for the evolution of the covariance, we may deduce for instance its growth behaviour through the application of standard PDE methods. For instance if we consider the equation for the evolution of $\Theta^{(2)}$ \eqref{LASALT-adv-form}, where we assume incompressibility $\div{(\E{\u})} = 0,$ and choose $\xi^{(1)} = \hat{\mb{x}},$ $\xi^{(2)}=\hat{\mb{y}}$, then we can check directly that its $L^2$-norm satisfies
\begin{align*}
    \|\Theta^{(2)}_t\|_{L^2}^2 + \frac12 \int^t_0 \|\nabla \Theta^{(2)}_s\|_{L^2}^2 \diff s = \int^t_0 \|\nabla \E{\theta_s}\|_{L^2}^2 \diff s,
\end{align*}
where we have taken into account that $\Theta^{(2)}(0) = 0$. Since by the parabolicity of the expectation equation \eqref{LASALT-EPX}, we have the estimate
\begin{align*}
    \int^T_0 \|\nabla \E{\theta_t}\|_{L^2}^2 \diff t \leq C \|\theta_0\|_{L^2}^2 \, e^T,
\end{align*}
where $C > 0$ is some constant (see Section \ref{expectation-proof} below for more details), we can deduce that the space-averaged covariance $\|\Theta^{(2)}\|_{L^2}^2$ evolves at most exponentially fast.
\end{remark}

\begin{remark}[Extension to $p$-th central moments]\label{p-th-moment-remark}
One may also ask if closed equations for the higher moments of the advected tensor field $a$ can be derived, thus providing a generalisation of Proposition \ref{covariance-eq}, which may help us to understand for instance the non-Gaussianity of the system. In the case where $a$ is a scalar field, the $p$-th central moment $A^{(p)} := \E{(a-\E{a})^p}$ indeed satisfies a closed, iterated system:
\begin{align} \label{pth-moment}
    \begin{split}
   \partial_{t} A^{(p)} + \mathcal L_{\E{\frac{\partial H}{\partial \mu}}}A^{(p)} &= \displaystyle\sum_k \left(\frac12 \mathcal L_{\xi_k}^2 A^{(p)} + p\left(\mathcal L_{\xi_k} A^{(p-1)}\right)\left(\mathcal L_{\xi_k} \E{a}\right) + \frac{p(p-1)}{2}A^{(p-2)} \left(\mathcal L_{\xi_k} \E{a}\right)^2\right),
    \end{split}
\end{align}
which recovers \eqref{covar-eq} in the case $p=2$ (see Appendix \ref{p-th-moment-scalar} for the proof). However, when $a$ is a general tensor field, we have not been able to obtain a closed system for its $p$-th central moment due to the non-commutativity of the tensor product (which is commutative only in the scalar field case).
\end{remark}

\section{Preliminaries and notation for the analysis in Section \ref{sec:well-posedness}}\label{sec:prelim-anal}

\subsection{Function spaces, inequalities and embeddings}
We define the $L^2$-inner product as $\left<f,g\right>_{L^2}:= \int_{\mathbb{T}^{2}} f(x)\cdot g(x) \ {\sf d} V$, where ${\sf d} V$ is the Lebesgue measure on $\mathbb{T}^{2}$ and denote by $\|\cdot\|_{L^2}$ its corresponding norm. For any $k\in \mathbb N$, we denote the Sobolev space by $H^{k}$, equipped with the norm
\[
||f||^{2}_{H^{k}}:=\sum_{j \leq k} \|D^jf\|_{L^2}^2,
\]
where $D$ represents weak derivative. We define the space $\dot{H}^{k}$ to be the subspace of $H^k$ consisting of functions that integrate to zero, that is, $\int_{\mathbb{T}^{2}} f(x) \ {\sf d} V=0$. For any $p\in \mathbb{Z}$, we denote by $L^{p}(\mathbb{T}^{2}; \mathbb{R}^{2})$ the class of all measurable $p$ - integrable functions defined on the two-dimensional torus, with values in $\mathbb{R}^{2}$ . This space is endowed with its canonical norm $\|f\|_{L^p} := \bigg(\displaystyle\int_{\mathbb{T}^2} \abs{f}^{p} \ \sf d V\bigg)^{1/p}$. Conventionally, for $p = \infty$ we denote by $L^{\infty}$ the space of essentially bounded measurable functions. Next, if $X$ is a general Banach space we let $C([0, \infty); X)$ be the space of continuous functions from $[0, \infty)$ to $X$ equipped with 
the uniform convergence norm over compact subintervals of $[0, \infty)$ and
$L^p([0, T]; X)$ the space of measurable functions from $[0, \infty)$ to $X$ such that the norm 
\[
\|f\|_{L^p(0, T; X)}=\bigg(\displaystyle\int_0^t \|f(t)\|_X^p \diff t\bigg)^{1/p}
\] 
is finite. \\ 
\textbf{Biot-Savart operator.} For any $\f:\mathbb{T}^{2}\to\mathbb{R}^{2}$, we define the curl operator
\[ \text{curl} \ \f=\nabla^{\perp}\cdot \f=\partial_{2}f^{2}-\partial_{1}f^{1}, \]
where $\partial_{i}$ denotes the derivative $\frac{\partial}{\partial {x^{i}}}$. The inverse of the curl operator (known as the Biot-Savart operator) is defined by
\[ \mathcal{K}f:=\nabla^{\perp}(-\Delta)^{-1} f \]
acting on mean-free functions $f:\mathbb{T}^{2}\to \mathbb{R}$. In fluid dynamics, the Biot-Savart operator allows us to reconstruct the mean-free component of the velocity vector field $\u$ from the vorticity function $\omega$, satisfying $\text{curl} \ \u=\omega$. Moreover, we have the inequality
\begin{equation}\label{BiotSavart:ine}
\left\Vert \u\right\Vert _{H^{k+1}}\leq C_{k}\left\Vert \omega\right\Vert
_{H^{k}},
\end{equation}
for all $k\geq 0,$ where $C_{k}>0$ represents a positive constant depending only on $k$, cf. \cite{Scho}. \\
\textbf{Inequalities and embeddings.} 
In our well-posedness analysis below, we will be using different forms of Sobolev embeddings, namely, the Gagliardo-Nirenberg interpolation inequalities. For the sake of clarity, we list below the ones we will make use of most often. For every smooth function $f:\mathbb{T}^{2}\to \mathbb{R}$ with zero-mean, it holds
\begin{eqnarray}
\left\Vert f \right\Vert_{L^{3}} &\lesssim& \left\Vert f \right\Vert ^{2/3}_{L^{2}}  \left\Vert \nabla f \right\Vert ^{1/3}_{L^{2}}, \label{Sob:ine1} \\
\left\Vert f \right\Vert_{L^4} &\lesssim& \norm{\nabla f }_{L^{2}}. \label{Sob:ine2} 
\end{eqnarray} 
 
We will also need a particular case of Young's inequality, which states that for any $a,b\in\RR^{+}$ and $1\leq p,q \leq \infty$ such that $1/p + 1/q=1$, it holds
\[ ab\leq \frac{a^p}{p} + \frac{b^q}{q}.\] 
A particular case of this reads
\begin{equation}\label{PP:ine}
ab\leq  \frac{a^2}{2\epsilon} +\frac{\epsilon b^2}{2}, \quad \text{for}\quad \epsilon >0,
\end{equation}
which is typically referred to as Peter-Paul's inequality.

\subsection{Some results from stochastic analysis}
We recall some results from the theory of stochastic processes that will be employed in our proofs later. We refer the reader to the standard references \cite{PraZab,Fla96} for a more thorough review. We begin by fixing a stochastic basis
 $\mathcal{S}=(\Xi,\mathcal{F},\lbrace\mathcal{F}_{t}\rbrace_{t\geq 0}, \mathbb{P},\lbrace W^{i}\rbrace_{i=1}^N),$ that is, a filtered probability space $(\Xi,\mathcal F,\mathbb P)$ together with a family $\lbrace W^{i} \rbrace_{i = 1}^N$ of i.i.d. Brownian motions that is adapted to the filtration $\lbrace\mathcal{F}_{t}\rbrace_{t\geq0}$.  \\ \\
In our existence proof for the linear stochastic equations, we will use the following version of It\^o's lemma:
\begin{lemma}[It\^o's first formula, \cite{kunita1997stochastic}]\label{ito-lemma}
Let $\phi_t$ be the flow of the following forward Stratonovich SDE
\begin{align*}
    \diff X_t = \mu(t,X_t) \diff t + \displaystyle\sum_k \sigma_k(t,X_t) \circ \diff W_t^k.
\end{align*}
Then for any $C^2$-smooth $k$-form $K$, we have the following
\begin{align}\label{ito-first-formula}
    \begin{split}
    (\phi_t)_*K(x) - K(x) &= - \int^t_0 \LL_\mu\left((\phi_s)_*K(x)\right) \diff s - \displaystyle\sum_k \int^t_0 \LL_{\sigma_k}\left((\phi_s)_*K(x)\right) \diff W_s^k \\
    &\quad + \frac12 \displaystyle\sum_k \int^t_0 \LL_{\sigma_k}^2 ((\phi_s)_*K(x))\diff s,
    \end{split}
\end{align}
where $(\phi_t)_*$ denotes push-forward (right action by the inverse of $\phi_t$).
\end{lemma}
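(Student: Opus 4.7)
The plan is to derive \eqref{ito-first-formula} in two stages: first establish the Stratonovich counterpart of the formula, then convert it to It\^o form using the standard bracket correction. Writing $K_t := (\phi_t)_* K$, the first step is to prove the Stratonovich transport equation
\begin{equation*}
\diff K_t + \LL_\mu K_t\,\diff t + \sum_k \LL_{\sigma_k} K_t \circ \diff W_t^k = 0.
\end{equation*}
For a scalar $K$ this follows from the Stratonovich chain rule applied to the identity $K(y) = K_t(\phi_t(y))$: differentiating along the defining SDE and using that Stratonovich calculus obeys the classical chain rule yields precisely the equation above. To pass from scalars to a general $C^2$ $k$-form, I would write $K$ in local coordinates as $K = \sum_I K_I\,\diff x^{i_1} \wedge \cdots \wedge \diff x^{i_k}$, apply the scalar case to the coefficients $K_I \circ \phi_t^{-1}$ and to the pushed-forward coordinate one-forms $(\phi_t)_* \diff x^{i_j} = \diff(x^{i_j} \circ \phi_t^{-1})$, and recombine using the Leibniz rule for the Lie derivative together with the derivation property of Stratonovich differentials.

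The second step is the conversion to It\^o form. Applying the standard Stratonovich--It\^o correction
\begin{equation*}
\sum_k \int_0^t \LL_{\sigma_k} K_s \circ \diff W_s^k = \sum_k \int_0^t \LL_{\sigma_k} K_s\,\diff W_s^k + \frac12 \sum_k \bigl[\LL_{\sigma_k} K_\cdot,\,W^k\bigr]_t
\end{equation*}
reduces the problem to computing the cross-variation on the right. Since $\LL_{\sigma_k}$ is a linear differential operator with time-independent coefficients it passes through Stratonovich integration, so applying $\LL_{\sigma_k}$ to the Stratonovich equation for $K_t$ gives a Stratonovich equation for $\LL_{\sigma_k} K_t$ whose $\diff W_t^k$-coefficient is $-\LL_{\sigma_k}^2 K_t$. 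Hence $\diff[\LL_{\sigma_k} K,\,W^k]_t = -\LL_{\sigma_k}^2 K_t\,\diff t$, and substituting back yields exactly \eqref{ito-first-formula}, with the $+\tfrac12 \LL_{\sigma_k}^2 K_t$ term appearing as the It\^o correction.

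The main obstacle I anticipate is the rigorous justification of the Stratonovich transport equation for general differential forms rather than just scalars. One needs the flow $\phi_t$ to be a $C^1$-diffeomorphism in $x$, so that pushed-forward forms are well defined and sufficiently regular; this is supplied by Kunita's theory of stochastic flows under the standard Lipschitz and growth conditions on $\mu$ and $\sigma_k$, which are implicitly assumed here. Once this regularity is secured, the remaining work --- the scalar chain rule, the Leibniz assembly, and the cross-variation computation --- is algebraic and routine, and the whole derivation is essentially that given in \cite{kunita1997stochastic}.
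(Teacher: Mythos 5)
The paper does not prove this lemma at all: it is imported verbatim from Kunita's book, so there is no in-paper argument to compare against. Your two-stage derivation is the standard one and is correct. Stage one, the Stratonovich transport identity $\diff K_t + \LL_\mu K_t\,\diff t + \sum_k \LL_{\sigma_k}K_t \circ \diff W_t^k = 0$ for $K_t = (\phi_t)_*K$, follows exactly as you say from the chain-rule property of Stratonovich calculus applied first to scalars via $K_t(\phi_t(y)) = K(y)$ and then assembled for forms through local coordinates, the Leibniz rule, and the fact that $\LL_v$ commutes with the spatial differential. Stage two, the bracket correction, correctly identifies the $\circ\,\diff W_t^k$-coefficient of $\LL_{\sigma_k}K_t$ as $-\LL_{\sigma_k}^2 K_t$, whence $\diff\bigl[\LL_{\sigma_k}K_\cdot, W^k\bigr]_t = -\LL_{\sigma_k}^2 K_t\,\diff t$ and the $+\tfrac12\LL_{\sigma_k}^2$ term in \eqref{ito-first-formula} appears with the right sign.

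Two points deserve slightly more care than your sketch gives them. First, the step ``$\LL_{\sigma_k}$ passes through Stratonovich integration'' is an interchange of an unbounded differential operator with a stochastic integral; it is legitimate here, but only because Kunita's flow theory guarantees that $(\phi_s)_*K$ is jointly regular enough in $(s,x)$ (the $C^{4+\alpha}$ assumption on the $\xi_k$ and $C^2$ on $K$ are what make this work), and that regularity should be invoked explicitly rather than treated as implicit. Second, in the paper's application the coefficients $\sigma_k$ may be time-dependent (only $L^\infty$ in time), so $\LL_{\sigma_k}$ does not literally have time-independent coefficients; the cross-variation computation survives because the time-dependence contributes only a finite-variation part, which does not affect $\bigl[\,\cdot\,, W^k\bigr]_t$, but the sentence as written is stronger than what holds. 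Neither point is a gap in substance, and your derivation is, as you note, essentially the argument in the cited reference.
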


We also use the following lemma, which is an easy corollary of the Kunita-It\^o-Wentzell formula stated in \cite{de2019implications}, Theorem 3.1.
\begin{lemma} \label{ito-int}
Let $\phi_t$ be the flow as in Lemma \ref{ito-lemma}. Then for any $C^2$ semimartingale $K$, taking values in the $k$ forms, we have
\begin{align}
    \begin{split}
    &(\phi_t)_*\left(\int^t_0 K(s,x) \diff s\right) = \int^t_0 (\phi_s)_*K(s,x) \diff s - \int^t_0 \LL_\mu\left(\int^s_0(\phi_s)_*K(r,x)\diff r\right) \diff s \\
    &- \displaystyle\sum_k \int^t_0 \LL_{\sigma_k}\left(\int^s_0(\phi_s)_*K(r,x)\diff r\right) \diff W_s^k + \frac12 \displaystyle\sum_k \int^t_0 \LL_{\sigma_k}^2 \left(\int^s_0(\phi_s)_*K(r,x)\diff r\right)\diff s.
    \end{split}
\end{align}
\end{lemma}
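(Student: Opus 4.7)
My plan is to reduce Lemma \ref{ito-int} to a direct application of the Kunita--It\^o--Wentzell (KIW) formula stated as Theorem 3.1 of \cite{de2019implications}, exploiting the fact that a pure-drift semimartingale has vanishing martingale part and therefore contributes no joint quadratic variation with the Brownian motions driving the flow $\phi_t$.

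First I would introduce the auxiliary $k$-form valued semimartingale
\begin{align*}
M(t,x) := \int_0^t K(s,x)\,\diff s,
\end{align*}
so that $M(0,x)=0$ and the stochastic differential of $M$ reduces to the bounded-variation expression $\diff M(t,x)=K(t,x)\,\diff t$. Since $K$ is a $C^2$ semimartingale in $x$ by assumption, $M$ is also spatially $C^2$, which is the regularity required to feed $M$ into the KIW formula in place of the generic semimartingale argument there.

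Next I would apply the KIW formula of \cite{de2019implications} to the push-forward $(\phi_t)_*M(t,x)$ along the Stratonovich flow with drift $\mu$ and diffusion vector fields $\{\sigma_k\}$ from Lemma \ref{ito-lemma}. That formula produces three families of contributions: (i) the push-forward of the intrinsic differential of the semimartingale, which here is just $(\phi_s)_*K(s,x)\,\diff s$ because $\diff M=K\,\diff s$; (ii) the Lie-derivative terms coming from the action of the flow on the current value of the push-forward, namely $-\LL_\mu\big((\phi_s)_*M(s,x)\big)\,\diff s - \sum_k \LL_{\sigma_k}\big((\phi_s)_*M(s,x)\big)\,\diff W_s^k$ together with the It\^o correction $\tfrac12\sum_k \LL_{\sigma_k}^2\big((\phi_s)_*M(s,x)\big)\,\diff s$ generated by the Stratonovich-to-It\^o conversion of the diffusion part; and (iii) joint quadratic-variation terms pairing the martingale part of $M$ with the Brownian motions $W^k$. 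Because $M$ is pure drift, the contributions in (iii) vanish identically, so only (i) and (ii) survive. Substituting $M(s,x)=\int_0^s K(r,x)\,\diff r$ in the surviving Lie-derivative terms and using $(\phi_0)_*M(0,x)=0$ yields precisely the identity claimed in the lemma.

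The main bookkeeping obstacle I anticipate is making sure the signs and the form of the Lie-derivative terms match the forward push-forward convention $(\phi_t)_*$ of Lemma \ref{ito-lemma}, rather than the pullback / inverse-flow convention adopted in some references, and similarly that the It\^o correction $\tfrac12 \LL_{\sigma_k}^2$ appears with the sign dictated by the right action used in \cite{de2019implications}. Once these conventions are aligned no further computation is required: the lemma follows from the generic KIW statement specialised to a bounded-variation integrand.
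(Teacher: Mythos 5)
Your proposal is correct and matches the paper's own (unwritten) argument: the paper simply asserts that Lemma \ref{ito-int} is an easy corollary of the Kunita--It\^o--Wentzell formula of \cite{de2019implications}, Theorem 3.1, and your reduction via the pure-drift semimartingale $M(t,x)=\int_0^t K(s,x)\,\diff s$, with the joint quadratic-variation terms vanishing and the Stratonovich-to-It\^o conversion producing the $\tfrac12\LL_{\sigma_k}^2$ correction, is exactly the intended specialisation. Your attention to the push-forward versus pullback sign conventions is the only genuine point of care, and you have handled it correctly.
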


When obtaining estimates for It\^o integrals, the Burkholder-Davis-Gundy inequality will be needed. In the present context, it reads
\begin{equation}\label{BGD:ineq}
 \mathbb{E}\left[ \displaystyle \sup_{s \in [0,T]}\left| \int_{0}^{t} X_{s} \diff W_ {s}\right|^{p} \right] \leq C_p \mathbb{E} \left[ \int_{0}^{T} |X_s|^{2} \ \diff t \right]^{p/2},
\end{equation}
for any $p\geq 1,$ where $C_{p}$ is an absolute constant depending on $p$. Here, $X_t$ is any square integrable semimartingale adapted to the filtration $\{\mathcal F_t\}_{t\geq0}$.
\subsection{Assumptions on the noise vector fields $\{\xxi_{k}\}_{k=1}^N$}
In the well-posedness analysis, we assume that
the vector fields $\xxi_{k}:\TT^2 \rightarrow \RR^2, \,k=1,\ldots,N$ are of class $L^\infty([0,T],C^{4+\alpha}(\TT^2,\RR^2))$ for some $0<\alpha<1$ and satisfy the uniform ellipticity condition
\begin{equation}\label{elliptic:Lie}
\displaystyle\sum_{k=1}^N \displaystyle\sum_{i,j=1}^{2} \xi^{i,j}_{k}(t,x)\xi^{i,j}_{k}(t,x)\eta_{i}\eta_{j}\geq \lambda  |\eta|^{2},
 \end{equation}
for some $\lambda>0$ and every $\eta \in\RR^2$. This generalises a fundamental property of the Laplace operator $\Delta$, which can be recovered by choosing ${\mb \xi}_k = e_k$, $k=1,2,$ where $\{e_1,e_2\}$ is the canonical basis for $\RR^2$.

From an analysis perspective, we opt to work with Lie derivatives instead of general first order differential operators since the curl commutes with the Lie derivative (a consequence of Cartan's formula), allowing us to obtain the vorticity formulation of the stochastic Boussinesq system, making the analysis simpler. 
\subsection{Statements of the main analytical results}
Let us state here the notion of solution we will employ and the main theorems that we are going to prove in the following sections. For this, we need to understand the strategy we are going to follow in order to solve the 2D LA SALT Boussinesq equations $\eqref{LA:SALT:Ito:Bou}.$ Indeed, to construct a solution, we carry out the following steps:\\
\begin{itemize}
    \item{We start by solving for the variables $\mathbb{E}[\u]$ and $\mathbb{E}[\theta]$ in the equations for the expectation \eqref{LA:SALT:Exp:Bou}. Since $\mathbb{E}[\u]$ is incompressible, it is sufficient to project the momentum equation onto its vorticity formulation by applying the curl operator, solve for the new system, and then recover $\mathbb{E}[\u]$ by means of the Biot-Savart law. The pressure terms $\mathbb E[p-|\u|^2/2]$ can be recovered by solving a Poisson problem for the pressure terms. This argument is detailed in  Section \ref{sec:well-posedness}}
    \item{We plug the already solved deterministic variables $\mathbb{E}[\u],$ $\mathbb{E}[\theta],$ and $\mathbb E[p-|\u|^2/2]$ into the main equations \eqref{LA:SALT:Ito:Bou}. We note that these become stochastic linear equations with a forcing. Now we need to solve for $\u.$ To this aim, we provide the required estimates on the linear equation. Once $\u$ is solved, we can recover the expected pressure as $\mathbb{E} [p] = \mathbb{E} [p - |\u|^2/2] +  \mathbb{E} [|\u|^2/2].$ We will denote $f(t,x) = -{\sf{d}}\mathbb{E} [p - |\u|^2/2-g\theta y] + g\E{\theta}\mb{\hat{y}}$ since this appears as a forcing after solving the pressure term in the expected equations. }

\end{itemize}

\begin{definition}[Strong solution of the 2D LA-SALT-Boussinesq equations]
We say that a process $(u, \theta) \in L^2\left(\Omega; C(\mathbb R^+, H^2(\mathbb T^2, \mathbb R^2) \times H^3(\mathbb T^2, \mathbb R))\right) $ is a strong global solution to the 2D LA-SALT-Boussinesq equation \eqref{LA:SALT:Ito:Bou} if $(u_t, \theta_t)$ is adapted to the filtration $\{\mathcal F_t\}_{t \geq 0}$ and satisfies
\begin{align}\label{LA-Boussinesq-vorticity}
    \begin{cases}
    &u(t,x,y) - u_0(x,y) + \displaystyle\int^t_0 \mathcal L_{\mathbb E[u]} u(s,x,y) \diff s + \displaystyle\sum_{k=1}^N \int^t_0 \mathcal L_{\xi_k} u(s,x,y) \,\diff W_s^k + \int_0^t f(s,x,y) \diff s  \\
    &\qquad =  \displaystyle \frac12 \sum_{k=1}^N \int^t_0 \mathcal L_{\xi_k}^2 u(s,x,y) \diff s - gy \int_0^t {\sf d}\theta(s,x,y) \diff s, \\
    &\theta(t,x,y) - \theta_0(x,y) + 
\, \displaystyle\int^t_0 \mathcal L_{\mathbb E[u]} \theta(s,x,y) \diff s + \displaystyle\sum_{k=1}^N \int^t_0 \mathcal L_{\xi_k} \theta(s,x,y) \,\diff W_s^k \\
& \quad \displaystyle =  \frac12\sum_{k=1}^N \int^t_0 \mathcal L_{\xi_k}^2 \theta(s,x,y) \diff s,
    \end{cases}
\end{align}
almost surely for all $t>0$, where, as specified in the preliminaries, the notation $\mathcal{L}$ is used to represent both the Lie derivative applied to one-forms $u$ and to scalars $\theta$.
\end{definition}

\begin{theorem}\label{main:th:1}
Let $(\u_0,\theta_0) \in H^2(\mathbb T^2, \mathbb R^2) \times H^3(\mathbb T^2, \mathbb R)$, then there exists a unique global strong solution of the 2D LA-SALT Boussinesq equation \eqref{LA:SALT:Ito:Bou}.
\end{theorem}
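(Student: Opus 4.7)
The plan is to exploit the cascade structure of the LA SALT system. Because the expectations $\mathbb{E}[\u]$ and $\mathbb{E}[\theta]$ evolve autonomously according to the deterministic parabolic system \eqref{LA:SALT:Exp:Bou}, and because the fluctuation equations are driven linearly by these quantities, the proof decouples into two essentially independent problems: (i) global well-posedness of the deterministic expectation system, and (ii) global well-posedness of a linear SPDE with bounded, smooth, deterministic coefficients. Once $\mathbb{E}[\u]$ and $\mathbb{E}[\theta]$ are found, the quantity $\mathbb{E}[p - |\u|^2/2]$ is recovered by solving a Poisson equation that comes from imposing $\nabla\cdot \mathbb{E}[\u] = 0$ in the expectation of the momentum equation.

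First I would attack the expectation system. Passing to vorticity form \eqref{LA:Vor:Exp:Bou} eliminates the pressure and reduces the problem to a coupled system for $(\mathbb{E}[\omega],\mathbb{E}[\theta])$ in which the operator $\tfrac12\sum_k \mathcal{L}_{\xi_k}^2$ is uniformly elliptic by the assumption \eqref{elliptic:Lie}. Thus the system is a 2D viscous Boussinesq-type system with transport by the incompressible field $\mathbb{E}[\u]$ recovered from $\mathbb{E}[\omega]$ via the Biot-Savart law \eqref{BiotSavart:ine}, a vortex-stretching source $g\partial_x \mathbb{E}[\theta]$, and an advection-diffusion equation for $\mathbb{E}[\theta]$. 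This is structurally analogous to the viscous 2D Boussinesq equations with full dissipation, whose global regularity is classical. Concretely, I would use a Galerkin scheme combined with a hierarchy of energy estimates: an $L^2$ estimate giving control of $\|\mathbb{E}[\theta]\|_{L^2}$ and then of $\|\mathbb{E}[\omega]\|_{L^2}$ through a Gronwall argument using the Peter--Paul inequality \eqref{PP:ine} to absorb $g\partial_x \mathbb{E}[\theta]$ into the parabolic dissipation; followed by higher-order $H^k$ estimates where the Lie derivative commutators are controlled using the smoothness assumed on $\xi_k$ and the Gagliardo-Nirenberg inequalities \eqref{Sob:ine1}--\eqref{Sob:ine2}. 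To feed the linear SPDE step, I need $\mathbb{E}[\u] \in L^\infty_{\mathrm{loc}}([0,\infty);H^4)$ and $\mathbb{E}[\theta] \in L^\infty_{\mathrm{loc}}([0,\infty);H^3)$, which in turn requires $H^3$ data for $\mathbb{E}[\omega]$; parabolic smoothing allows instantaneous upgrade in regularity, but to keep the proof transparent I would propagate directly the $H^3 \times H^3$ regularity of $(\mathbb{E}[\omega],\mathbb{E}[\theta])$ using standard tame estimates.

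With the expectation fields now treated as known deterministic smooth coefficients and the forcing $f(t,x)$ known, the remaining system \eqref{LA-Boussinesq-vorticity} becomes linear in $(u,\theta)$. The $\theta$ equation decouples entirely and can be solved using the stochastic flow $\phi_t$ of $\mathrm{d}X_t = \mathbb{E}[\u](X_t)\,\mathrm{d}t + \sum_k \xi_k(X_t)\circ \mathrm{d}W_t^k$: Lemma \ref{ito-lemma} gives the explicit representation $\theta_t = (\phi_t)_*\theta_0$ as a pushforward of the initial scalar, from which $H^3$-regularity and martingale-type estimates follow from the smoothness of $\phi_t$ (ensured by the $C^{4+\alpha}$ regularity of $\xi_k$). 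Plugging this $\theta$ into the momentum equation, the $u$ equation is of the same transport-type structure with an extra forcing $-gy\,{\sf d}\theta\,\mathrm{d}t + f(t,x)\,\mathrm{d}t$; Lemmas \ref{ito-lemma} and \ref{ito-int} combined give the Duhamel-type representation $u_t = (\phi_t)_* u_0 + (\phi_t)_* \int_0^t (\phi_s)^*(\text{source})\,\mathrm{d}s$ (suitably interpreted as one-forms). Taking $L^2(\Omega)$ expectations and $\sup_{t\leq T}$ via the Burkholder-Davis-Gundy inequality \eqref{BGD:ineq} yields $u \in L^2(\Omega; C([0,T];H^2))$ provided $\theta \in L^2(\Omega;C([0,T];H^3))$, which explains the asymmetric regularity requirement in the theorem: the source $gy\,{\sf d}\theta$ costs one derivative of $\theta$ to produce an $H^2$ bound on $u$.

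Uniqueness follows from the linearity of the fluctuation equations once the expectations are fixed: writing an equation for the difference of two solutions and using a Gronwall argument with the It\^o correction term cancelling the a priori dissipative Lie-Laplacian contribution gives uniqueness in $L^2(\Omega;C([0,T];L^2))$, which by the density argument extends to our strong solution class. The main obstacle I foresee is the first step: establishing global $H^3$ bounds for the expectation system. Although the parabolic dissipation is genuinely present (uniform ellipticity \eqref{elliptic:Lie}), the Lie-Laplacian $\sum_k \mathcal{L}_{\xi_k}^2$ is not the standard Laplacian and produces lower-order commutator terms when one computes $H^k$ estimates by differentiating the equation; these terms must be carefully controlled so that the Gronwall constants remain finite on every bounded time interval. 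The buoyancy coupling $g\partial_x\mathbb{E}[\theta]$ in the vorticity equation is the other delicate point, but in two dimensions the smoothing effect of the parabolic operator on $\mathbb{E}[\theta]$ is strong enough to close the estimates without requiring any smallness assumption.
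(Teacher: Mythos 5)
Your strategy is the same as the paper's: solve the deterministic expectation system first (in vorticity form, using the uniform ellipticity of $\tfrac12\sum_k\mathcal{L}_{\xi_k}^2$, energy estimates, and Gr\"onwall), recover the modified pressure from a Poisson problem, then treat \eqref{LA:SALT:Ito:Bou} as a linear SPDE with known smooth coefficients and construct its solution explicitly via the stochastic flow of characteristics, with uniqueness from linearity, BDG and Gr\"onwall. However, there is a concrete gap in your regularity bookkeeping. You propose to propagate only $H^3\times H^3$ regularity for $(\mathbb{E}[\omega],\mathbb{E}[\theta])$, i.e.\ $\mathbb{E}[\u]\in H^4$. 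But $H^4(\mathbb{T}^2)$ embeds only into $C^{2,\alpha'}$, so the stochastic flow $\phi_{s,t}$ of \eqref{char-eq} is only $C^{2,\alpha}$, and the representation formulas $\theta_t=\theta_0\circ\phi_t^{-1}$ and $u_t=(\phi_t)_*u_0+\cdots$ then cannot deliver $\theta_t\in H^3$ or $u_t\in H^2$: composing $H^3$ data with a merely $C^{2,\alpha}$ diffeomorphism (and, for the one-form, multiplying by its Jacobian) loses a derivative. This is exactly why the paper assumes $\mathbb{E}[\u_0]\in H^5$ (equivalently $\mathbb{E}[\omega_0]\in H^4$), so that $\mathbb{E}[\u]\in C([0,\infty),C^{3,\alpha'})$ and the flow is $C^{3,\alpha}$. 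Parabolic smoothing, which you invoke in passing, repairs this only for $t\ge t_0>0$; at $t=0$ the coefficient of the characteristic ODE is not smooth enough under your hypotheses, so either the extra assumption on $\mathbb{E}[\u_0]$ or a weak-solution framework is needed.

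A second, smaller omission: on the torus the Biot--Savart law only reconstructs the mean-free part of $\mathbb{E}[\u]$, and the spatial mean $\overline{\mathbb{E}[\u]}$ is \emph{not} conserved here because of the buoyancy term $g\mathbb{E}[\theta]\hat{\mb y}$ and the nonvanishing spatial integral of $\mathcal{L}_{\xi_k}^2\mathbb{E}[\u]$. One must therefore derive and close a separate ODE estimate for the mean (the paper's Step 2), both for the a priori bounds and, crucially, for the uniqueness argument, since the difference of two velocity fields contains a mean part not controlled by the vorticity difference. Your sketch passes silently from $\mathbb{E}[\omega]$ to $\mathbb{E}[\u]$ as if Biot--Savart determined the full field. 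Both gaps are repairable without changing your architecture, but as written the proof does not close.
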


The proof of Theorem \ref{main:th:1} strongly depends on the following fact, which we state as a separate result:

\begin{theorem} \label{main:th:2}
Let $(\mathbb{E}[\u_{0}],\mathbb{E}[\theta_{0}])\in H^{5}(\mathbb T^2, \mathbb R^2)\times H^{3}(\mathbb T^2, \mathbb R)$ be an initial data. Then equations \eqref{LA:Vor:Exp:Bou} have a unique global strong solution $(\mathbb{E}[\u],\mathbb{E}[\theta])\in C\left([0,\infty), H^{5}(\mathbb T^2, \mathbb R^2)\times H^{3}(\mathbb T^2, \mathbb R)\right)$.
\end{theorem}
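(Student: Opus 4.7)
The plan is to treat equations \eqref{LA:Vor:Exp:Bou} as a deterministic parabolic system of 2D viscous Boussinesq type, in which the role of the Laplacian is played by the second order operator $\tfrac12\sum_k \mathcal L_{\xi_k}^2$. Under the uniform ellipticity assumption \eqref{elliptic:Lie} this operator has coercive principal symbol, so the whole system is genuinely parabolic, and I would adapt the classical global well-posedness strategy for fully viscous 2D Boussinesq (cf.~\cite{Chae,HouLi1}) to the present setting. Local existence is produced by a Galerkin scheme: project $(\E{\omega},\E{\theta})$ onto the first $n$ Fourier modes on $\mathbb T^2$, reconstruct $\E{\u}_n$ via the Biot--Savart operator $\mathcal K$, solve the resulting finite-dimensional quadratic ODE, and pass to the limit $n\to\infty$ using the uniform a priori bounds described below.

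The a priori estimates are arranged hierarchically, exploiting the skew-symmetry of the transport term in $L^2$ (from $\nabla\cdot\E{\u}=0$), the coercivity provided by \eqref{elliptic:Lie}, the Sobolev algebra property of $H^s(\mathbb T^2)$ for $s>1$, and Peter--Paul's inequality \eqref{PP:ine}. Testing the $\E{\theta}$-equation against $\E{\theta}$, differentiating up to order three, and using commutator estimates for $[\partial^\beta,\mathcal L_{\xi_k}^2]$ for $|\beta|\le 3$ (bounded via the $C^{4+\alpha}$ regularity of the $\xi_k$), Gr\"onwall delivers
\[
\|\E{\theta_t}\|_{H^3}^2 + \lambda\int_0^t \|\E{\theta_s}\|_{H^4}^2\,\diff s \le C\bigl(T,\{\xi_k\},\|\E{\theta_0}\|_{H^3}\bigr)
\]
on every interval $[0,T]$. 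The essential output is the $L^2_tH^4_x$ integrability of $\E{\theta}$, which controls the forcing $g\partial_x\E{\theta}$ in the vorticity equation. Differentiating the vorticity equation to order four and testing against $\partial^\beta\E{\omega}$ for $|\beta|\le 4$, an integration by parts in $x$ shifts one derivative off $\partial_x\E{\theta}$ so that the forcing contributes $g\|\E{\theta_s}\|_{H^4}\|\E{\omega_s}\|_{H^5}$; by \eqref{PP:ine} this is absorbed into $\epsilon\|\E{\omega_s}\|_{H^5}^2$ (controlled by the dissipation from $\tfrac12\sum_k\mathcal L_{\xi_k}^2$) plus $\tfrac{1}{\epsilon}\|\E{\theta_s}\|_{H^4}^2$ (time-integrable by the previous step). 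Analogous treatment of the advection commutator then yields
\[
\|\E{\omega_t}\|_{H^4}^2 + \lambda\int_0^t \|\E{\omega_s}\|_{H^5}^2\,\diff s \le C\bigl(T,\{\xi_k\},\|\E{\omega_0}\|_{H^4},\|\E{\theta_0}\|_{H^3}\bigr),
\]
and \eqref{BiotSavart:ine} promotes this to $\E{\u}\in C([0,\infty),H^5)$. Uniqueness is the standard $L^2\times L^2$ energy estimate on the difference of two solutions, closed by Gr\"onwall using the above regularity.

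The hardest step, in my view, is the \emph{ordering} of these estimates. The $H^4$-bound for $\E{\omega}$ cannot be closed directly, because $g\partial_x\E{\theta}$ only sits in $H^2$ at a fixed time; one must first harvest the parabolic gain $\E{\theta}\in L^2_tH^4_x$ from the $\E{\theta}$-equation (which has its own elliptic dissipation), and only then feed it into the vorticity energy estimate via Cauchy--Schwarz in space--time. This staircase is the heart of the 2D viscous Boussinesq global theory; the additional wrinkle here is that $\tfrac12\sum_k\mathcal L_{\xi_k}^2$ expands into a principal part plus lower-order pieces involving derivatives of $\xi_k$, so every Sobolev-level estimate produces commutators that must be absorbed by the dissipation at the same level. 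The assumed $L^\infty_tC^{4+\alpha}$ regularity of the $\xi_k$ is precisely what keeps the top-level commutators in the $H^4$-vorticity estimate controllable.
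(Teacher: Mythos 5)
Your overall strategy coincides with the paper's: pass to the vorticity--temperature formulation, exploit the uniform ellipticity \eqref{elliptic:Lie} of $\tfrac12\sum_k\mathcal L_{\xi_k}^2$ as a genuine parabolic dissipation, run hierarchical energy estimates in which the commutators $[\partial^\beta,\mathcal L_{\xi_k}^2]$ (first order, by the $C^{4+\alpha}$ regularity of the $\xi_k$) and the buoyancy forcing are absorbed into the dissipation by Peter--Paul, obtain existence by Galerkin approximation plus compactness, and prove uniqueness by an $L^2$ estimate on differences. Your remark about the ordering --- that the top-level vorticity estimate cannot be closed pointwise in time and must instead consume the time-integrated parabolic gain $\E{\theta}\in L^2_tH^4_x$ harvested from the temperature equation --- is precisely the mechanism the paper uses at the Sobolev levels it writes out explicitly (it states that the $\dot H^4\times\dot H^3$ estimates follow "in a similar way").

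There is, however, one concrete gap. You reconstruct $\E{\u}\in C([0,\infty),H^5)$ from the vorticity bounds "via \eqref{BiotSavart:ine}", but the Biot--Savart operator $\mathcal K=\nabla^\perp(-\Delta)^{-1}$ only recovers the \emph{mean-free} part $\V$ of $\U:=\E{\u}$. Because the buoyancy term $g\Theta\hat{\y}$ has, in general, nonzero spatial average, the mean $\overbar{\U}=\int_{\TT^2}\U\,{\sf d}V$ is \emph{not} conserved; it obeys the ODE \eqref{mean:equation:evolution} driven by $\Theta$ and $\V$ and must be estimated separately (this is Step 2 of the paper's proof). The omission is not merely cosmetic: in your uniqueness argument the difference of velocities enters the nonlinear terms through $\|\tilde{\U}\|_{H^1}$, and this quantity is not controlled by $\|\tilde{\Omega}\|_{L^2}$ alone --- one needs the companion Gr\"onwall estimate \eqref{final:mean:estimate2} bounding $|\tilde{\overbar{\U}}|$ in terms of $\|\tilde{\V}\|_{L^2}$ and $\|\tilde{\Theta}\|_{L^2}$ before the $L^2\times L^2$ difference estimate closes. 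The fix is routine, since the mean satisfies a linear ODE with an already-controlled right-hand side, but as written your argument does not close without it.
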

\begin{remark}
 We note that even though we start the stochastic equation with $\u_0 \in H^2(\mathbb T^2, \mathbb R^2)$, we require an additional assumption that $\E{\u_0} \in H^5(\mathbb T^2, \mathbb R^2)$ in order to prove Theorem \ref{main:th:1}. This is possible since the expectation of a random field may be smoother than the random field (for example take $u_0(x) = x + W_x$, where $W_x$ is spacial Brownian motion. Then $u_0$ is not even differentiable, yet $\E{u_0(x)} = x$ is smooth.) For deterministic initial conditions, this means that we start with $u_0 \in H^5(\mathbb T^2, \mathbb R^2)$ but it loses regularity to $H^2(\mathbb T^2, \mathbb R^2)$ as soon as $t>0$ and will remain there. We may also construct weak solutions instead of classical solutions by employing the techniques in \cite{de2019well}, thus avoiding this assumption altogether.
\end{remark}
\begin{remark}
The regularity of the initial datum in Theorem \ref{main:th:2} is not sharp. It is well-known that even for $L^{2}$ regular initial data we could still provide the instantaneous regularisation of the solutions. We have chosen precisely $H^{5}\times H^{3}$ since our final goal is to prove Theorem \ref{main:th:1} which requires higher regularity. This is due to the need of having a sufficiently smooth coefficient for the characteristic equations in order get a smooth flow (cf. Subsection \ref{Thm-1-proof}). It is easy to check that equations \eqref{LA:SALT:Ito:Bou} lose their parabolic character and become pure transport equations. 
\end{remark}

\section{Global well-posedness of the 2D LA-SALT Boussinesq system} \label{sec:well-posedness}

\subsection{Well-posedness of the expectation equations}\label{expectation-proof}
In this section we will provide the proof of Theorem \eqref{main:th:2}. The strategy of the proof will be divided into three parts: first, we will use energy methods to provide a priori estimates of the solution. Next, we will show a bound for the evolution of the average of $\U$. Finally, we will prove the uniqueness of solutions. \\\\
\textbf{Step 1: Energy methods and a priori estimates.}
First of all, to simplify the exposition we will employ the following notations: 
\[ \mathbb{E}[\u]:= \U, \quad \mathbb{E}[\theta]:=\Theta, \quad \mathbb{E}[p]:=P, \]
which we decompose as
\[ \U= \overbar{\U}+\V, \quad \Theta=\overbar{\Theta}+W, \quad \text{such that} \ \quad \overbar{\U}=\int_{\TT^2}\U \ {\sf d} V \quad \text{and} \quad  \overbar{\Theta}=\int_{\TT^2}\Theta \ {\sf d} V, \]
where $\V,W$ are mean-free . With this notation at hand, equation \eqref{LA:SALT:Exp:Bou} is given by
\begin{equation}\label{new:expectation:bouss}
\left\{
\begin{array}{rl}
 \partial_{t} \U+ \mathcal{L}_{\U}\U = g\Theta \hat{y}& -\nabla \left(P - \mathbb{E} \left[ \frac{|\u|^{2}}{2} \right] \right) + \displaystyle \frac{1}{2} \sum_{k=1}^{N} \mathcal{L}^{2}_{\xi_k}\U,  \\ \partial_{t}\Theta+\mathcal{L}_{\U}\Theta&=\displaystyle \frac{1}{2} \sum_{k=1}^{N} \mathcal{L}^{2}_{\xi_k}\Theta,
\end{array}
\right.
\end{equation} 
and by applying the curl to the momentum equation above and taking into account that the curl and the Lie derivative commute, we obtain the following coupled PDE system for the vorticity and the potential temperature
\begin{equation}\label{new:expectation:vor:bouss}
\left\{
\begin{array}{rl}
 \partial_{t} \Omega + \mathcal{L}_{\U}\Omega&= g\p_{x}\Theta + \displaystyle \frac{1}{2} \sum_{k=1}^{N} \mathcal{L}^{2}_{\xi_k}\Omega, \\ \partial_{t}\Theta+\mathcal{L}_{\U}\Theta&=\displaystyle \frac{1}{2} \sum_{k=1}^{N} \mathcal{L}^{2}_{\xi_k}\Theta,
 \end{array}
\right.
\end{equation} 
where $\Omega:=\nabla^{\perp}\cdot\U$. Notice also that by integrating \eqref{new:expectation:bouss} in space, the mean of $\U$ evolves as
\begin{equation}\label{mean:equation:evolution}
 \frac{\diff \overbar{\U}}{\diff t} = g\int_{\TT^{2}}\Theta \hat{y} \ {\sf d} V + \displaystyle \frac{1}{2} \sum_{k=1}^{N} \int_{\TT^{2}}  \mathcal{L}^{2}_{\xi_k}\V \ {\sf d} V +\displaystyle \frac{1}{2} \sum_{i=1}^{N} \int_{\TT^{2}} \mathcal{L}^{2}_{\xi_k}\overbar{\U} \ {\sf d} V.
\end{equation} 
Let assume that $(\Omega(t),\Theta(t))$ are smooth and satisfy \eqref{new:expectation:vor:bouss}. Then for $T>0$, we will show that
 \begin{equation}\label{estimate:general:priori}
\displaystyle\sup_{t\in[0,T]}\left(\norm{\Omega}^{2}_{H^{4}}+ \norm{\theta}^{2}_{H^{2}} \right) \leq K(T)< \infty,
\end{equation}
where $K(T)=K\left(\norm{\xi_{k}}_{H^{3}}, \lambda, \norm{\Theta_{0}}_{H^{2}},\norm{\Omega_{0}}_{H^{4}}, T  \right)$, and $\lambda$ is the ellipticity constant specified in  \eqref{elliptic:Lie}. \\ \\
\textbf{$L^{2}$-estimate:} We begin by providing the $L^{2}$ estimate. We multiply the second equation in \eqref{new:expectation:vor:bouss} with $\Theta$ and integrate over $\TT^2$ to obtain
\[
\frac{1}{2}\frac{\diff}{\diff t}\norm{\Theta}^{2}_{L^2}+ \int_{\TT^2} \mathcal{L}_{\U}\Theta \Theta \ {\sf d} V =\displaystyle\sum_{k=1}^{N} \frac{1}{2}\int_{\TT^2} \mathcal{L}^{2}_{\xi_{k}}\Theta \Theta \ {\sf d} V. \]
Using the incompressibility condition, the second term on the left-hand side vanishes. Moreover, by expanding out the double Lie derivative operator, we have
\[ \displaystyle\sum_{k=1}^{N}\frac{1}{2}\int_{\TT^2} \mathcal{L}^{2}_{\xi_{k}}\Theta \Theta \ {\sf d} V = \frac{1}{2}\displaystyle\sum_{k=1}^{N} \displaystyle\sum_{i,j=1}^{2} \int_{\TT^2} \left( a_{k}^{ij}(x)\p_{i}\p_{j}\Theta+ b_{k}^{i}(x)\p_{i}\Theta \right) \Theta \  {\sf d} V:= \frac{1}{2}\left(I_{1}+I_{2}\right), \]
with coefficients $a_{k}^{ij}=\xi^{i}_{k}\xi^{j}_{k}$ and $b_{k}^{i}=\left(\xi_{k}\cdot\nabla\right)\xi^{i}_{k}$,
where
\begin{align*}
I_{1} &=-\displaystyle\sum_{k=1}^{N}\displaystyle\sum_{i,j=1}^{2}\int_{\TT^2} a_{k}^{ij}(x)(\p_{i}\Theta)(\p_{j}\Theta) \ {\sf d} V  - \displaystyle\sum_{k=1}^{N}\displaystyle\sum_{i,j=1}^{2}\int_{\TT^2} (\p_{i} a_{k}^{ij}(x))(\p_{j}\Theta) \Theta \ {\sf d} V, \\
I_2 &= \displaystyle\sum_{k=1}^{N}\displaystyle\sum_{i,j=1}^{2}\int_{\TT^2} b_{k}^{i}(x) (\p_{i}\Theta) \Theta {\sf d} V.
\end{align*}
This gives us
\[ \frac{1}{2}\frac{\diff}{\diff t}\norm{\Theta}^{2}_{L^2} +\displaystyle\sum_{k=1}^{N}\displaystyle\sum_{i,j=1}^{2} \frac{1}{2}\int_{\TT^2} a_{k}^{ij}(x)(\p_{i}\Theta) (\p_{j}\Theta) \ {\sf d} V =  -\displaystyle\sum_{k=1}^{N}\displaystyle\sum_{i,j=1}^{2} \frac{1}{2} \int_{\TT^2} (\p_{i} a_{k}^{ij}(x)) (\p_{j}\Theta) \Theta \ {\sf d} V + I_{2}, \]
and applying the uniform ellipticity condition \eqref{elliptic:Lie} and H\"{o}lder's inequality, we arrive at
\begin{equation}
\frac{1}{2}\frac{\diff }{\diff t}\norm{\Theta}^{2}_{L^2} + \frac{\lambda}{2}\int_{\TT^2} |\nabla\Theta|^{2} \ {\sf d} V \leq  \norm{\nabla a_{k}^{i,j}}_{L^\infty} \norm{\nabla\Theta}_{L^2} \norm{\Theta}_{L^2}+\norm{b_{k}^{i}}_{L^{\infty}}\norm{\nabla\Theta}_{L^2} \norm{\Theta}_{L^2},  
\end{equation}
for some $\lambda > 0$. Using Peter-Paul's inequality \eqref{PP:ine}, we find
\begin{equation*}
\frac{1}{2}\frac{\diff}{\diff t}\norm{\Theta}^{2}_{L^2} + \frac{\lambda}{2}\int_{\TT^2} |\nabla\Theta|^{2} \ {\sf d} V \leq   \left(\norm{\nabla a_{k}^{ij}}_{L^{\infty}}+\norm{b_{k}^i}_{L^{\infty}}\right)\left( \frac{\norm{\Theta}^{2}_{L^2}}{\epsilon}+\epsilon\norm{\nabla\Theta}^{2}_{L^2}  \right).
\end{equation*}
and choosing $\epsilon=\lambda/4\left(\norm{\nabla a_{k}^{ij}}_{L^{\infty}}+\norm{b_{k}^i}_{L^{\infty}}\right)$, we obtain
\begin{equation*}
\frac{1}{2}\frac{\diff}{\diff t}\norm{\Theta}^{2}_{L^2} + \frac{\lambda}{4}\int_{\TT^2} |\nabla\Theta|^{2} \ {\sf d} V \leq C_{0} \norm{\Theta}^{2}_{L^{2}},
\end{equation*} 
where $C_{0}=C\left(\lambda,\norm{\nabla a_{k}^{ij}}_{L^{\infty}},\norm{b_{k}^{i}}_{L^{\infty}}\right)$. By Gr\"{o}nwall's inequality, we conclude
\begin{equation}\label{L2:estimate:theta} \displaystyle\sup_{t\in[0,T]}\norm{\Theta}^{2}_{L^{2}} \leq 2C_{0}  \norm{\Theta_{0}}^{2}_{L^2}e^{T}, \ \text{and} \ \int_{0}^{T}\norm{\nabla\Theta}^{2}_{L^2} \ \diff \tau \leq 2C_{0}  \norm{\Theta_{0}}^{2}_{L^2}e^{T}.
\end{equation}

To compute the $L^{2}$ evolution of $\Omega$, we multiply the first equation in \eqref{new:expectation:vor:bouss} by $\Omega$ to get
\begin{align*}
\frac{1}{2}\frac{\diff}{\diff t}\norm{\Omega}^{2}_{L^2}+ \int_{\TT^2} (\mathcal{L}_{\U}\Omega) \Omega \ {\sf d} V = \int_{\TT^2} g(\partial_{x}\Theta)\Omega + \displaystyle\sum_{k=1}^{N}\frac{1}{2}\int_{\TT^2} (\mathcal{L}^{2}_{\xi_{k}} \Omega) \Omega \ {\sf d} V.
\end{align*}
By the incompresibility condition, the second term vanishes, and taking into account the ellipticity condition satisfied by the double Lie derivative operator \eqref{elliptic:Lie}, we integrate by parts and apply H\"older's inequality as before, obtaining
\[
\frac{1}{2}\frac{\diff}{\diff t}\norm{\Omega}^{2}_{L^2}+ \frac{\lambda}{2} \int_{\TT^2} |\nabla\Omega|^{2} \ {\sf d} V \leq \left(\|\nabla a_{k}^{i,j}\|_{L^\infty}+\|b_{k}^{i}\|_{L^\infty}\right)\norm{\Omega}_{L^{2}}\norm{\nabla\Omega}_{L^2}
+ g\norm{\Theta}_{L^2}\norm{\nabla\Omega}_{L^2}.
\]
Once again using Peter-Paul's inequality \eqref{PP:ine} and the fact that $\norm{\nabla \V}_{L^{2}}\leq  \norm{\Omega}_{L^{2}},$ by the Biot-Savart inequaliy \eqref{BiotSavart:ine} we obtain
\[ \frac{1}{2}\frac{\diff}{\diff t}||\Omega|^{2}_{L^2}+ \frac{\lambda}{4} \int_{\TT^2} |\nabla\Omega|^{2} \ {\sf d} V \leq C_{1}\left( \norm{\Omega}^{2}_{L^{2}}+\norm{\Theta}^{2}_{L^{2}}\right), \]
where $C_{1}=C\left(\lambda,\|\nabla a_{k}^{i,j}\|_{L^\infty},\norm{b_{k}^{i}}_{L^\infty}\right)$. Finally, by Gr\"onwall's inequality and bound \eqref{L2:estimate:theta}, we get
\begin{equation}\label{L2:estimate:omega}
\displaystyle\sup_{t\in[0,T]}\norm{\Omega}^{2}_{L^{2}} \leq C_{2} \left(\norm{\Omega_{0}}^{2}+\norm{\Theta_{0}}^{2}_{L^2}e^{T}\right)e^{T}, \ \text{and} \ \int_{0}^{T}\norm{\nabla \Omega}^{2}_{L^2} \ \diff \tau \leq C_{2} \left(\norm{\Omega_{0}}^{2}+\norm{\Theta_{0}}^{2}_{L^2}e^{T}\right)e^{T}.
 \end{equation}
with $C_{2}=C(C_{1},C_{0})$.
\\ \\
\textbf{$\dot{H}^{1}$-estimate:} Next, let us compute the $\dot{H}^1$-norm of $\Theta$. Integrating by parts, we have
\[ \frac{1}{2}\frac{\diff}{\diff t}\norm{\nabla\Theta}^{2}_{L^2}= \underbrace{-\int_{\TT^2} \nabla(\overbar{\U}\cdot\nabla\Theta)\cdot\nabla\Theta \ {\sf d} V}_{=0}
-\int_{\TT^2} \nabla(\V\cdot\nabla\Theta)\cdot\nabla\Theta \ {\sf d} V +\frac{1}{2} \displaystyle\sum_{k=1}^{N}\int_{\TT^2} \nabla \mathcal{L}^{2}_{\xi_{k}}\Theta\cdot\nabla\Theta \ {\sf d} V.
\]
For the second term on the right-hand side above, we obtain 
\begin{align*}
-\int_{\TT^2} \nabla(\V\cdot\nabla\Theta)\cdot\nabla\Theta {\sf d} V &= -\int_{\TT^2} (\nabla \V:\nabla\Theta)\cdot \nabla \Theta \ {\sf d} V - \frac{1}{2} \int_{\TT^2} \V\cdot \nabla(|\nabla\Theta|^{2}) {\sf d} V \\
&=-\int_{\TT^2} (\nabla \V:\nabla\Theta)\cdot \nabla \Theta \ {\sf d} V +\underbrace{\frac{1}{2} \int_{\TT^2} (\nabla\cdot \V) |\nabla\Theta|^{2} \ {\sf d} V}_{=0}. 
\end{align*}
Using H\"older's inequality, we have that
\begin{equation}
\abs{\int_{\TT^2} (\nabla \V:\nabla\Theta)\cdot \nabla \Theta \ {\sf d} V} \leq  ||\nabla \V||_{L^3}||\nabla \Theta||^{2}_{L^3}.
\end{equation}
For the third term, we get
\[\frac{1}{2}\displaystyle\sum_{k=1}^{N} \int_{\TT^2} \nabla \mathcal{L}_{\xi_{k}}^{2}\Theta\cdot\nabla\Theta \ {\sf d} V = \frac{1}{2}\displaystyle\sum_{k=1}^{N} \int_{\TT^2} \mathcal{L}_{\xi_{k}}^{2} \nabla \Theta\cdot\nabla\Theta \ {\sf d} V - \frac{1}{2} \displaystyle\sum_{k=1}^{N} \int_{\TT^2} \left[\mathcal{L}_{\xi_{k}}^{2} , \nabla \right]\Theta\cdot\nabla\Theta \ {\sf d} V .
\]
The commutator term can be bounded using H\"older's inequality
\[
\abs{\displaystyle \frac{1}{2} \sum_{k=1}^{N} \int_{\TT^2} \left[\mathcal{L}_{\xi_{k}}^{2} , \nabla \right]\Theta\cdot\nabla\Theta \ {\sf d} V}\leq \norm{\left[\mathcal{L}_{\xi_{k}}^{2} , \nabla \right]\Theta}_{L^2}\norm{\nabla\Theta}_{L^2} \leq C_{3}\norm{\nabla\Theta}^{2}_{L^{2}},
\]
since by a general result from harmonic analysis, the commutator $[\mathcal L^2_{\xi_k},\nabla]$ is a first order operator (cf. \cite{Tay76}). Here, the constant $C_3$  has dependence $C_{3}=C\left(\norm{\nabla a^{i,j}_{k}}_{L^{\infty}},\norm{b^{i}_{k}}_{L^{\infty}}\right)$.
As before, taking into account the uniform ellipticity condition, we arrive at
\begin{equation}\label{Main:eq:theta}
\frac{1}{2}\frac{\diff}{\diff t}\norm{\nabla\Theta}^{2}_{L^2}+ \frac{\lambda}{2}\int_{\TT^2} |\Delta \Theta|^{2} \ {\sf d} V \leq C_{3} \left(\norm{\nabla \V}_{L^3}\norm{\nabla \Theta}^{2}_{L^3} + \norm{\nabla\Theta}_{L^2}\norm{\Delta \Theta}_{L^2}+\norm{\nabla\Theta}^{2}_{L^2}\right).
\end{equation}
Moreover, by Peter-Paul's inequality \eqref{PP:ine} we have
\begin{equation}\label{PP:eq:theta}
\norm{\nabla\Theta}_{L^2}\norm{\Delta\Theta}_{L^2} \leq \frac{1}{2\delta}\norm{\nabla\Theta}^{2}_{L^2} + \frac{\delta}{2}\norm{\Delta\Theta}^{2}_{L^2,}
\end{equation}
for any $\delta>0$ and the bound
\begin{align}
||\nabla \V||_{L^3}\norm{\nabla\Theta}^{2}_{L^{3}} &\leq ||\nabla \V||^{3}_{L^{3}}+||\nabla \Theta||^{3}_{L^{3}} \nonumber \\
&\leq  ||\nabla \V||^{2}_{L^2}||\Delta \V||_{L^2} + ||\nabla \Theta||^{2}_{L^2}||\Delta \Theta||_{L^2} \nonumber \\
&\leq \frac{1}{2\epsilon}||\nabla \V||^{2}_{L^2}||\nabla \V||^{2}_{L^2} + \frac{\epsilon}{2} \norm{\Delta \V}^{2}_{L^2} + \frac{1}{2\nu}||\nabla \Theta||^{2}_{L^2}||\nabla \Theta||^{2}_{L^2} + \frac{\nu}{2} ||\Delta \Theta||^{2}_{L^2},\label{GN:eq:theta}
\end{align}
where $\epsilon>0$, $\nu>0$ will be chosen later on and we have invoked Gagliardo--Nirenberg inequality \eqref{Sob:ine1} in the second line and Peter--Paul's \eqref{PP:ine} inequality in the third one. Inserting \eqref{PP:eq:theta} and \eqref{GN:eq:theta} into  \eqref{Main:eq:theta}, we obtain that
\begin{align*}
 \frac{1}{2}\frac{\diff}{\diff t}\norm{\nabla\Theta}^{2}_{L^2}+ \frac{\lambda}{2}\int_{\TT^2} |\Delta \Theta|^{2} \ {\sf d} V &\leq C_{3} \Biggr( \frac{1}{2\epsilon}\norm{\nabla \V}^{2}_{L^2}\norm{\nabla \V}^{2}_{L^2} + \frac{\epsilon}{2} \norm{\Delta \V}^{2}_{L^2} \\
 & \ + \frac{1}{2\nu}||\nabla \Theta||^{2}_{L^2}||\nabla \Theta||^{2}_{L^2} + \frac{\nu}{2} ||\Delta \Theta||^{2}_{L^2}+\frac{1}{2\delta}\norm{\nabla\Theta}_{L^2}^2 + \frac{\delta}{2}\norm{\Delta\Theta}_{L^2}^2 \Biggr).
\end{align*}
Taking $\delta=\nu=\epsilon= \lambda /4C_{3}$, we have
\begin{equation*}
 \frac{1}{2}\frac{\diff}{\diff t}\norm{\nabla\Theta}^{2}_{L^2}+ \frac{\lambda}{4}\int_{\TT^2} |\Delta \Theta|^{2} \ {\sf d} V \leq C_{4} \left( \norm{\nabla \V}^{2}_{L^2}\norm{\nabla \V}^{2}_{L^2} +  \norm{\Delta \V}^{2}_{L^2} + ||\nabla \Theta||^{2}_{L^2}||\nabla \Theta||^{2}_{L^2} +\norm{\nabla\Theta}_{L^2}^2 \right),
\end{equation*}
with $C_{4}=C\left(C_{3},\lambda\right).$ Integrating in time, we obtain
\begin{align*}
&\norm{\nabla\Theta}^{2}_{L^2}+ \frac{\lambda}{4} \int_{0}^{t}\norm{\Delta \Theta}^{2}_{L^{2}} \diff\tau \\
&\leq C_{4} \left( \norm{\nabla\Theta_{0}}^{2}_{L^2}+ \int_{0}^{t}  \left(\norm{\nabla \V}^{2}_{L^2}\norm{\nabla \V}^{2}_{L^2} +  \norm{\Delta \V}^{2}_{L^2} + ||\nabla \Theta||^{2}_{L^2}||\nabla \Theta||^{2}_{L^2}  +\norm{\nabla\Theta}_{L^2}^2\right) \diff \tau \right).    
\end{align*} 

Noticing that $\norm{\nabla \V}_{L^{2}}\leq \norm{\Omega}_{L^{2}},$ $\norm{\Delta \V}_{L^{2}}\leq \norm{\nabla \Omega}_{L^{2}},$ and using the global bounds \eqref{L2:estimate:omega}, we see that by Gr\"onwall's inequality
\begin{equation}\label{global:grad:theta1}
\displaystyle\sup_{t\in[0,T]} \norm{\nabla\Theta}^{2}_{L^2}\leq C_{5} \norm{\nabla\Theta_{0}}^{2}_{L^2}\text{exp}\left(\int_{0}^{T} \left(\norm{\Omega_{0}}^{2}+\norm{\Theta_{0}}^{2}_{L^2}e^{t}\right)e^{t} \diff \tau\right) <\infty, \end{equation} 
and
\begin{equation}\label{global:grad:theta2}
\int_{0}^{T}\norm{\Delta \Theta}^2_{L^2} \diff \tau \leq C_{5}\norm{\nabla\Theta_{0}}^{2}_{L^2}\text{exp}\left(\int_{0}^{T} \left(\norm{\Omega_{0}}^{2}+\norm{\Theta_{0}}^{2}_{L^2}e^{t}\right)e^{t} \diff \tau \right)<\infty,
\end{equation}
where $C_{5}=C(C_{2},C_{4})$.
In a similar fashion, the evolution for the $\dot{H}^1$-norm of the vorticity $\Omega$ is given by
\begin{align*}
&\frac{1}{2}\frac{\diff}{\diff t}\norm{\nabla\Omega}^{2}_{L^2} - \frac{1}{2}\displaystyle\sum_{k=1}^{N} \int_{\TT^2} \mathcal{L}_{\xi_{k}}^{2} \nabla \Theta\cdot\nabla\Theta \ {\sf d} V \\
&=- \int_{\TT^2} \nabla(\overbar{\U}\cdot\nabla\Omega)\nabla\Omega \ {\sf d} V - \int_{\TT^2} \nabla(V\cdot\nabla\Omega)\nabla\Omega \ {\sf d} V
+\int_{\TT^2} \nabla \Theta_{x}\nabla\Omega \ {\sf d} V - \frac{1}{2} \displaystyle\sum_{k=1}^{N} \int_{\TT^2} \left[\mathcal{L}_{\xi_{k}}^{2} , \nabla \right]\Theta\cdot\nabla\Theta \ {\sf d} V\\
&=: K_1+K_2+K_3+K_4.
\end{align*}
Note that $K_1 = 0$ and the bounds for $K_2$ and $K_4$ can be obtained obtained in the same way as before
\begin{align*}
    |K_2| \leq \norm{\nabla \V}_{L^3}\norm{\nabla\Omega}^{2}_{L ^3}, \quad |K_4 | \leq  C_{6}\norm{\nabla\Omega}^{2}_{L^{2}},
\end{align*}
where $C_{6}=C\left(\norm{\nabla a^{i,j}_{k}}_{L^{\infty}},\norm{b^{i}_{k}}_{L^{\infty}}\right)$.
For $K_3$, integrating by parts and using H\"older's inequality, we get
\[ \abs{K_3}=\abs{\int_{\TT^2} \nabla \Theta \nabla\p_{x}
\Omega \ {\sf d} V} \leq ||\nabla \Theta||_{L^2}\norm{\Delta \Omega}_{L^2}. \]
The double Lie derivative term on the LHS can be manipulated in the same way as before using the ellipticity condition and we obtain
\begin{eqnarray*}
\frac{1}{2}\frac{\diff}{\diff t}\norm{\nabla\Omega}^{2}_{L^2}+\frac{\lambda}{2}\int_{\TT^2} |\Delta \Omega|^{2} \ {\sf d} V \leq C_6 \left( \norm{\nabla \V}_{L^3}\norm{\nabla \Omega}^{2}_{L^3}+\norm{\nabla\Omega}^{2}_{L^2}+ \norm{\nabla\Theta}_{L^2}\norm{\Delta\Omega}^{2}_{L^2} \right).
\end{eqnarray*}
Using Young's inequality, Gagliardo-Nirenberg \eqref{Sob:ine1}, and the fact that $\norm{\Delta \V}_{L^{2}}\leq \norm{\nabla \Omega}_{L^{2}}$ we get
\[
\frac{1}{2}\frac{\diff}{\diff t}\norm{\nabla\Omega}^{2}_{L^2}+\frac{\lambda}{4}\norm{\Delta \Omega}^{2}_{L^2}\leq C_{7} \left( \norm{\nabla\Theta}^{2}_{L^{2}}+\norm{\nabla\Omega}^{2}_{L^{2}} \right)\norm{\nabla\Theta}^{2}_{L^{2}},
\]
where $C_{7}=C\left(C_{6},\lambda\right)$. Integrating above in time, using Gr\"onwall's inequality, and noticing that we have the global bounds \eqref{L2:estimate:omega} and \eqref{global:grad:theta1} 
we find that
\[\displaystyle\sup_{t\in[0,T]}\norm{\nabla\Omega}^{2}_{L^2}+ \frac{\lambda}{4}\int_{0}^{T}\norm{\Delta \Omega}^{2}_{L^2} \diff t \leq C_{8}\norm{\nabla\Omega_{0}}_{L^2
}^{2}\exp{\left(\left(\norm{\Omega_{0}}^{2}+\norm{\Theta_{0}}^{2}_{L^2}e^{T}\right)e^{T}\right)}<\infty,
\]
where $C_{8}=C(C_{7},C_{2})$. One can check that the higher order estimates $\dot{H}^{4}$ and $\dot{H}^{3}$ for $\Omega$ and $\Theta$ respectively can be established in a similar way. To avoid repetition, we will not present the computations here. Thus, we have shown the a priori estimate \eqref{estimate:general:priori}. 

\textbf{Step 2: Mean growth control.} Next, let us control the growth of the mean part of $\U$ in terms of $\V,\Theta$ and the initial mean value $\overbar{\U}_{0}$. This estimate is essential in order to provide the uniqueness of solutions, as we will see later. Integrating \eqref{mean:equation:evolution} in time, we get (set $g=1$ without loss of generality)
\begin{equation}
  \overbar{\U}(t)=\overbar{\U}_{0} + \int_{0}^{t}\int_{\TT^{2}} \Theta \hat{y} \ {\sf d} V \ \diff \tau   + \frac{1}{2} \int_{0}^{t}  \displaystyle\sum_{k=1}^{N} \int_{\TT^{2}}  \mathcal{L}^{2}_{\xi_k}\V \ {\sf d} V \diff \tau  + \frac{1}{2} \int_{0}^{t}\displaystyle\sum_{k=1}^{N} \int_{\TT^{2}} \mathcal{L}^{2}_{\xi_k}\overbar{\U} \ {\sf d} V \diff \tau.
\end{equation} 
Using the fact that $\overbar{\U}$ does not depend on the spatial variable, we have
\[\frac{1}{2} \int_{0}^{t} \abs{ \displaystyle\sum_{k=1}^{N} \int_{\TT^{2}} \mathcal{L}^{2}_{\xi_k}\overbar{\U} \ {\sf d} V}  \rm{d}\tau \leq C_{9} \int_{0}^{t} \abs{\overbar{\U}} \rm{d}\tau, \quad \frac{1}{2} \int_{0}^{t} \abs{ \displaystyle\sum_{k=1}^{N} \int_{\TT^{2}} \mathcal{L}^{2}_{\xi_k}\V \ {\sf d} V}  \rm{d}\tau \leq C_{10} \int_{0}^{t} \norm{\V}_{L^{2}} \rm{d}\tau,
\]
where $C_{9}=C(\norm{\xi_{k}}^{2}_{H^{3}})<\infty$ and $C_{10}=C(\norm{\xi_{k}}^{2}_{H^{3}})<\infty$. Hence, we deduce
\[\abs{\overbar{\U}(t)}\leq C_{11} \abs{\overbar{\U}_{0}}+ \int_{0}^{t} \left(\norm{\V}_{L^{2}}+\norm{\Theta}_{L^{2}}+\abs{\overbar{\U}}\right) \rm{d}\tau, \]
with $C_{11}=C(C_{9},C_{10})$. By invoking Gr\"onwall's inequality, we conclude that
\begin{equation}\label{final:mean:estimate}
\displaystyle\sup_{t\in[0,T]} \abs{\overbar{\U}(t)} \leq C_{11} \left( \abs{\overbar{\U}_{0}} \left(\displaystyle\sup_{t\in[0,T]}\norm{\V}_{L^{2}}+\displaystyle\sup_{t\in[0,T]}\norm{\Theta}_{L^{2}}\right)e^T \right).
\end{equation}
\begin{remark}
Notice that from the equation for $\overbar{\U}$, we also have that the same equation holds for the differences $\widetilde{\overbar{\U}}=\overbar{\U}^{1}-\overbar{\U}^{2},$ i.e.
\begin{equation}\label{final:mean:estimate2}
\displaystyle\sup_{t\in[0,T]} \abs{\tilde{\overbar{\U}}(t)} \leq C_{11} \left( \abs{\tilde{\overbar{\U}}_{0}} \left(\displaystyle\sup_{t\in[0,T]}\norm{\tilde{\V}}_{L^{2}}+\displaystyle\sup_{t\in[0,T]}\norm{\tilde{\Theta}}_{L^{2}}\right)e^T \right).
\end{equation}
\end{remark}

\textbf{Step 3: Uniqueness of solutions.} To show uniqueness of solutions, we will prove that any two different solutions of \eqref{new:expectation:vor:bouss} with the same initial data must be equal.  As usual, we demonstrate it by deriving an estimate for the evolution of their difference and invoking Gr\"onwall's inequality. Let $\Omega^{1},\Omega^{2}$, $\Theta^{1},\Theta^{2}, \overbar{\U}^{1},\overbar{\U}^{2}$ be two solutions to \eqref{new:expectation:vor:bouss}-\eqref{mean:equation:evolution}, with the same initial data $(\Omega_{0},\Theta_{0},\overbar{\U}_{0})$. Defining the differences $\tilde{\Omega}:=\Omega^{1}-\Omega^{2},\tilde{\Theta}:=\Theta^{1}-\Theta^{2}, \tilde{\U}:=\U^{1}-\U^{2}$, we infer that
\begin{equation}
\left\{
\begin{array}{rl}
 \partial_{t}\tilde{\Omega} + \mathcal{L}_{\U^{1}}\tilde{\Omega}+ \mathcal{L}_{\tilde{\U}}\Omega^{2}&= g\p_{x}\tilde{\Theta} + \displaystyle \frac{1}{2} \sum_{k=1}^{N}  \mathcal{L}^{2}_{\xi_k}\tilde{\Omega}, \\ \partial_{t}\tilde{\Theta}+\mathcal{L}_{\U^{1}}\tilde{\Theta}+\mathcal{L}_{\tilde{\U}}\Theta^{2}&=\displaystyle \frac{1}{2} \sum_{k=1}^{N} \mathcal{L}^{2}_{\xi_k}\tilde{\Theta}.
 \end{array}
\right.
\end{equation}
In order to control the nonlinear terms, we apply the identity
\[ \int_{\TT^2} \left(\U^{1}\cdot \nabla\tilde{f}+\tilde{\U}\cdot\nabla f^{2}\right) \tilde{f} \ {\sf d} V= -\int_{\TT^2} \tilde{\U}\cdot\nabla\tilde{f}f^{2} \ {\sf d}V,\]
with $f = \Omega$ and $\Theta$, and using H\"older's, Peter-Paul's inequality \eqref{PP:ine}, and the Sobolev embedding \eqref{Sob:ine2}, we obtain
\begin{eqnarray}\label{uni:nonlinear1}
\abs{\int \tilde{\U}\cdot\nabla\tilde{f}f^{2} \ {\sf d} V} &\leq& ||\tilde{\U}||_{L^4}||\nabla\tilde{f}||_{L^2}||f^{2}||_{L^4} \leq \frac{1}{2\epsilon}||\tilde{\U}||^{2}_{L^4}||f^{2}||_{L^4}^{2}+  \frac{\epsilon}{2}||\nabla\tilde{f}||^{2}_{L^2} \nonumber \\
&\leq& \frac{1}{2\epsilon}||\tilde{\U}||^{2}_{H^1}||f^{2}||^{2}_{H^1}+\frac{\epsilon}{2}||\nabla\tilde{f}||^{2}_{L^2}. 
\end{eqnarray}

Following the a priori estimates we derived earlier, using \eqref{uni:nonlinear1} with $\epsilon=\lambda/4\left(\norm{\nabla a^{ij}_{k}}_{L^{\infty}}+\norm{b^{i}_{k}}_{L^{\infty}}\right)$ and integrating in time, we obtain
\[
||\tilde{\Omega}|^{2}_{L^2}+||\tilde{\Theta}|^{2}_{L^2} \leq C_{11} \left( ||\tilde{\Omega}_{0}|^{2}_{L^2}+||\tilde{\Theta}_{0}|^{2}_{L^2}+\int_{0}^{t} \left\{ ||\tilde{\Omega}||^{2}_{L^{2}}+||\tilde{\Theta}||^{2}_{L^{2}}+ ||\tilde{\U}||^{2}_{H^1}\left(||\Omega^{2}||^{2}_{H^1}+||\Theta^{2}||^{2}_{H^1}\right)\right \} \ \diff \tau \right),
\]
where $C_{11}=C\left(\lambda,\norm{\nabla a^{ij}_{k}}_{L^{\infty}},\norm{b^{i}_{k}}_{L^{\infty}}\right)$. Moreover, since $\tilde{\U}=\tilde{\overbar{\U}}+\tilde{\V}$, we have by \eqref{final:mean:estimate2} and the Biot--Savart inequality \eqref{BiotSavart:ine} that, 
\begin{eqnarray*}
 ||\tilde{\U}||^{2}_{H^1}=||\tilde{\overbar{\U}}+\tilde{\V}||^{2}_{L^2}+||\nabla \tilde{\V}||^{2}_{L^2} &\leq & |\tilde{\overbar{\U}}_{0}|^{2}+\displaystyle \sup_{t\in[0,T]}\left(\| \tilde{\V}\|_{L^2}+\|\tilde{\Theta}\|_{L^2})^{2}\right) +\|\nabla \tilde{\V}\|^{2}_{L^2} \\
&\lesssim & |\tilde{\overbar{\U}}_{0}|^{2}+\displaystyle \sup_{t\in[0,T]}\left( \|\tilde{\Omega}\|^{2}_{L^2}+\|\tilde{\Theta}\|^{2}_{L^2}\right). 
\end{eqnarray*}
Plugging this above and denoting $X(t):= \displaystyle\sup_{t\in[0,T]}\left(\|\tilde{\Omega}\|^{2}_{L^2}+\|\tilde{\Theta}\|^{2}_{L^2}\right)$ yields
\[
X(t) \leq X(0)+|\tilde{\overbar{\U}}_{0}|^{2}\int_{0}^{t} \left(||\Omega^{2}||^{2}_{H^1}+||\Theta^{2}||^{2}_{H^1}\right) \ \diff \tau + \int_{0}^{t} X(t) \left(||\Omega^{2}||^{2}_{H^1}+||\Theta^{2}||^{2}_{H^1}\right) \ \diff \tau,  \]
and using Gr\"onwall's inequality we obtain
\begin{eqnarray*}
X(t) &\leq & C_{12} \Biggl( X(0)+|\tilde{\overbar{\U}}_{0}|^{2}\int_{0}^{t}\left(||\Omega^{2}||^{2}_{H^1}+||\Theta^{2}||^{2}_{H^1}\right) \ \diff \tau \Biggr) \text{exp}\Big\{ C\int_{0}^{t} \left(||\Omega^{2}||^{2}_{H^1}+||\Theta^{2}||^{2}_{H^1}\right) \ \diff \tau \Big\} \\
&\lesssim &  \left(X(0)+|\tilde{\overbar{\U}}_{0}|^{2}\right) \text{exp}\Big\{||\Omega^{2}_{0}||^{2}_{H^1}+||\Theta^{2}_{0}||^{2}_{H^1}\Big\},
\end{eqnarray*}
for $C_{12}=C\left(K(T)\right)$.
Therefore, assuming that $\U^{1}_{0}=\U^{2}_{0}$ and $\Theta_{0}^{1}=\Theta_{0}^{2}$, we have that $\Omega^{1}=\Omega^{2}$ and therefore by the Biot-Savart embedding \eqref{BiotSavart:ine} , we find that $\U^{1}=\U^{2}$.

\begin{remark}
To establish the existence of strong solutions it suffices to apply a standard Galerkin approximation. Then we repeat the same a priori estimates and check that they are also satisfied by the approximated equations and independent of the truncation step. Then, by using Aubin--Lions compactness argument, we can pass to the limit and show that there exists a subsequence that converges to the desired strong solution. We do not provide the details here as this is standard in the PDE literature.
\end{remark}
\begin{remark}\label{parabolic-remark}
Moreover, due to the parabolic character of equations \eqref{new:expectation:vor:bouss} (cf. \cite{temam2001navier}) we can bootstrap the regularity and show that for any $t_{0}>0$, we have that $(\Omega,\Theta)\in C^{\infty}\left([t_{0},\infty)\times \TT^{2}\right).$
\end{remark}
\begin{remark}\label{recover:velocity}
Notice that there is a one-to-one correspondence between the solutions in terms of $(\U,\Theta)$ and $(\Omega,\bar{\U},\Theta)$. Indeed, since the velocity field $\U$ is incompressible, we have that $\U=\nabla^{\perp}\psi$, where $\psi$ is the stream function. On the other hand, we have that $\Omega= \text{curl} \ \U$ and hence $\Omega=-\Delta \psi$. Moreover, an explicit (non-local) relation between the vorticity and the mean-free part of the velocity field is provided by the Biot-Savart law $\V=\mathcal K \Omega$. An extra difficulty is that the spatial mean of $\U$ is not conserved and therefore, we need to solve for the mean part $\overbar{\U} := \int_{\TT^2} \U {\sf d} V$ separately in order to be able to fully reconstruct the velocity field $\U = \V + \overbar{\U}$. Hence, as we have shown above, once we know that the vorticity and the potential temperature are smooth, plus a good control of the evolution of the mean of the velocity field, we can recover the full velocity field and infer that it has the same regularity, i.e., smooth. 
\end{remark}
\begin{remark}\label{pressure:recovery}
 To recover the modified pressure term  
$ \nabla \left(P - \mathbb{E} \left[ \frac{|\u|^{2}}{2}\right] \right)$, we take the divergence in \eqref{new:expectation:bouss}, obtaining the  following Poisson equation 
\begin{equation}\label{eq:poisson}
-\Delta \widetilde{p}= \nabla\cdot \mathcal{L}_{\U}\U-g\p_{y}\Theta-\frac{1}{2} \sum_{k=1}^{N} \nabla\cdot \mathcal{L}^{2}_{\xi_k}\U,
 \end{equation}
 where we have used the incompressibility condition $\nabla\cdot \U=0$, denote $\widetilde{p}:=\left(P - \mathbb{E} \left[ \frac{|\u|^{2}}{2}\right] \right)$ and impose suitable periodic boundary conditions. This Poisson equation differs from the usual, since the double Lie derivative term on the right hand side does not vanish. Inverting the Laplacian and noticing that the RHS of equation \eqref{eq:poisson} is smooth, we can recover the modified pressure term which due to standard elliptic regularity estimates we infer that
\[ P - \mathbb{E} \left[ \frac{|\u|^{2}}{2}\right] \in C^{\infty}\left([t_{0},\infty)\times \TT^{2}\right).\] 
We avoid writing the explicit form of the modified pressure term $\widetilde{p}$, which is given by   the convolution with the periodic Newtonian potential, since we prefer not to include any kind of Sobolev or Lebesgue type estimates, cf. \cite{MajdaBertozzi}.
\end{remark}

\subsection{Well-posedness of the linear stochastic system}\label{Thm-1-proof}
We now show the proof of our main Theorem \eqref{main:th:1}. To that purpose, we use the fact that due to Theorem \ref{main:th:2} , we have that $\E{\u}\in C\left([0,\infty), H^5(\mathbb{T}^{2})\right)$. We divide the proof into two steps: first we show the existence of solutions by explicitly constructing them using the characteristics of the system. Next, we show the uniqueness of solutions by performing a standard energy estimate. 
\\ \\
\textbf{Step 1: Existence of solutions via characteristics.}
Consider the characteristics for the system \eqref{LA-Boussinesq-vorticity},
\begin{align}\label{char-eq}
    \begin{split}
    \diff \X_t &= \mathbb E[\u](t,\X_t) \diff t + \sum_{k=1}^N \xxi_k(t,\X_t) \circ \diff W_t^k \\
    &= \left(\mathbb E[\u](t,\X_t) + \frac12 \sum_{k=1}^{N} \sum_{j=1}^2 \xi_k^j \partial_j\xxi_k(t,\X_t)\right) \diff t + \sum_{k=1}^N \xxi_k(t,\X_t) \diff W_t^k.
    \end{split}
\end{align}
Invoking Theorem \ref{main:th:2}, we have that $\E{\u}\in C\left([0,\infty), C^{3,\alpha'}(\mathbb{T}^{2})\right)$ due to the Sobolev embedding, with $0<\alpha'<1$. Therefore \eqref{char-eq} admits a unique global solution by Picard's Theorem and its flow $\phi_{s,t}$ is $C^{3,\alpha}$ regular for every $0<\alpha < \alpha'$ (cf. \cite{kunita1997stochastic}).  


We claim that an explicit solution to the LA-Boussinesq system \eqref{LA-Boussinesq-vorticity} can be expressed as
\begin{align}\label{omega-theta-explicit}
\left\{
\begin{array}{rl}
    u(t,\x) &= (\phi_t)_* u_0 (\x) - g \displaystyle \int_{0}^{t} (\phi_{s,t})_*(y \,{\sf{d}}\theta)(s,\x) \diff s - \displaystyle\int_0^t {(\phi_{s,t})_*} f(s, \x) \diff s ,\\
    \theta(t,\x) &= \theta_0 (\phi_t^{-1}(\x)) ,
\end{array}
\right.
\end{align}
where we recall that $f(t,x) = -{\sf{d}}\mathbb{E} [p - |\u|^2/2-g\theta y] + g\E{\theta}\mb{\hat{y}}$, and we have employed the shorthand notation $\phi_t$ to represent $\phi_{0,t}$ and $(\phi_t)_*u_0$ to denote the push-forward of the one-form $u_0 = \u_0 \cdot \sf d \mb \x$ with respect to the flow $\phi_t,$ which is given explicitly by
\begin{align}\label{push}
    (\phi_* u)(\x) = \u_i(\phi^{-1}(\x)) \frac{\partial \left( \phi^{-1} \right)^i}{\partial x^j}(\x) {\sf{d}} x^j(\x).
\end{align}
Since the flow $\phi_{s,t}$ is global, this provides an explicit construction of a global solution to \eqref{LA:SALT:Stra:Bou}. Furthermore, one can verify that $(u_t,\theta_t) \in H^2(\mathbb T^2, \mathbb R) \times H^3(\mathbb T^2, \mathbb R)$ owing to the $C^{3,\alpha}$ regularity of the flow $\phi_{s,t}$. To show this, first note that the backward SDE for $\A_t(\x) := \phi_t^{-1}(\x)$ reads
\begin{align}\label{backward-eq}
    \diff \A_t = \left(-\mathbb E[\u](t,\A_t) + \frac12 \sum_{k=1}^N \sum_{j=1}^2 \xi_k^j \partial_j\xxi_k(t,\A_t)\right) \diff t - \sum_{k=1}^N \xxi_k(t,\A_t) \widehat{\diff W_t}^k,
\end{align}
where $\widehat{\diff W_t}$ denotes backward integration of the Brownian motion. Consider the mollification $\theta_0^\epsilon = \rho^\epsilon * \theta_0$. By It\^o's first formula \eqref{ito-first-formula}, we obtain
\begin{align*}
    \theta_0^\epsilon(\A_t(\x)) = \theta_0^\epsilon(\x) &- \int^t_0\left(\mathbb E[\u](s,\A_s) \cdot \nabla (\theta_0^\epsilon(\A_s(\x))) - \frac12 \displaystyle\sum_{k=1}^N \xxi_k \cdot \nabla \left(\xxi_k \cdot \nabla\right)(\theta_0^\epsilon(\A_s(\x))) \right) \diff s \\
    &- \displaystyle\sum_{k=1}^N \int^t_0 \xxi_k (s,\A_s) \cdot \nabla (\theta_0^\epsilon(\A_s(\x))) \diff W_s^k,
\end{align*}
and taking into account the density of smooth functions in $H^3$, we have strong convergence $\theta_0^\epsilon \rightarrow \theta_0$ as $\epsilon \rightarrow 0$, which implies that $\theta(t,\X) = \theta_0(\A_t(\X))$ solves the $\theta$-equation in \eqref{LA-Boussinesq-vorticity}. Furthermore, since the inverse map $\A_t$ has $C^{3,\alpha}$ regularity and the initial data $\theta_0$ is in $H^3$, we see that its composition is also in $H^3$. For the $u$-equation, again by applying It\^o's first formula and using the fact that smooth functions are dense in $H^2$, we have
\begin{align}
\begin{split}
(\phi_t)_*u_0(x) &= u_0(x) - \int^t_0 \LL_{\E{u}}\left((\phi_s)_*u_0\right)(x) \diff s + \frac12 \sum_{k=1}^N \int^t_0 \LL_{\xi_k}^2\left((\phi_s)_*u_0\right)(x) \diff s \\
&\quad - \sum_{k=1}^N\int^t_0 \LL_{\xi_k}\left((\phi_s)_*u_0\right)(x) \diff W_s^k.
\end{split}\label{u-1}
\end{align}

By Lemma \ref{ito-int}, we obtain
\begin{align}
    &\int^t_0(\phi_{s,t})_*(y{\sf{d}}\theta)(s,\x) \diff s = (\phi_t)_*\int^t_0\phi_s^*(y{\sf{d}}\theta)(s,\x) \diff s \nonumber\\
    \begin{split}
    &= \int^t_0 y{\sf{d}}\theta(s,\x)\diff s - \int^t_0 \LL_{\E{u}}\left(\int^s_0(\phi_{r,s})_*(y{\sf{d}}\theta)(r,\x)\diff r\right) \diff s \\
    & + \sum_{k=1}^N \int^t_0 \LL_{\xi_k}^2 \left(\int^s_0(\phi_{r,s})_*(y{\sf{d}}\theta)(r,\x)\diff r\right) \diff s - \sum_{k=1}^N \int^t_0 \LL_{\xi_k}\left(\int^s_0(\phi_{r,s})_*(y{\sf{d}}\theta)(r,\x)\right) \diff W_r,
    \end{split}\label{u-2}
\end{align}
and similarly,
\begin{align}
    \begin{split}
    &\int_0^t {(\phi_{s,t})_*} f(s, \x) \diff s = \int_0^t f(s, \x) \diff s - \int^t_0 \LL_{\E{u}}\left(\int_0^s {(\phi_{r,s})_*} f(r, \x) \diff r\right)\diff s \\
    &+ \frac12 \sum_{k=1}^N \int^t_0 \LL_{\xi_k}^2\left(\int_0^s {(\phi_{r,s})_*} f(r, \x) \diff r\right)\diff s - \sum_{k=1}^N\int^t_0 \LL_{\xi_k}\left(\int_0^s {(\phi_{r,s})_*} f(r, \x) \diff r\right)\diff W_s^k.
    \end{split}\label{u-3}
\end{align}

Combining \eqref{u-1},\eqref{u-2} and \eqref{u-3}, we see that indeed the expression for $u$ in \eqref{omega-theta-explicit} satisfies the $u$-equation in \eqref{LA-Boussinesq-vorticity}. Since $\phi_t^{-1}$ has $C^{3,\alpha}$ regularity and $\u_0$ is in $H^2$, the pushforward $(\phi_t)_*u_0$ is in $H^2$ by \eqref{push}. Similarly, the other terms on the RHS of \eqref{omega-theta-explicit} can be shown to be $H^2$ so $u_t$ is indeed in $H^2$.

\noindent{\bf Step 2: Uniqueness.} By linearity of the system, we need only to verify that $(u_t,\theta_t) \equiv (0,0)$ for all $t>0$ provided $(u_0,\theta_0) \equiv (0,0)$. The $L^2$ estimate for $\theta$ can be computed as:
\begin{align*}
    \|\theta_t\|_{L^2}^2 \lesssim  \|\theta_0\|_{L^2}^2 &+ \sum_{k=1}^N \int^t_0 \|\div (\xxi_k \div(\xxi_k))\|_{L^\infty}
    \|\theta_s\|_{L^2}^2 \diff s + \sup_{0<s<t}|M_s|,
\end{align*}
where we used the divergence-free condition for $\mathbb E[\u]$ and $M_t := \sum_{k=1}^{\infty} \int^t_0 \left<\div(\xxi_k), \theta^2 \right>_{L^2}(s) \diff W_s^k$. By Gr\"onwall's inequality, one finds
\begin{align*}
    \|\theta_t\|_{L^2}^2 \lesssim \|\theta_0\|_{L^2}^2 + \sup_{0<s<t}|M_s|.
\end{align*}
Taking the expectation on both sides, and using the Burkholder-Davis-Gundy inequality
\begin{align*}
    \sup_{0<s<t} \mathbb E[M_s] \lesssim \sum_{k=1}^{N} \mathbb E \left[\int^t_0 \left<\div(\xxi_k), \theta^2 \right>_{L^2}(s) \diff s \right] \leq \sum_{k=1}^{N} \int^t_0 \|\div(\xxi_k)\|_{L^\infty}\mathbb E[\|\theta_s\|_{L^2}^2] \diff s,
\end{align*}
we have
\begin{align*}
    \mathbb E[\|\theta_t\|_{L^2}^2] \lesssim \|\theta_0\|_{L^2}^2
\end{align*}
by Gr\"onwall,
which implies $\theta_t \equiv 0$ if $\theta_0 \equiv 0$. From this, we have that ${\sf{d}} \theta_t \equiv 0$ if $\theta_0 \equiv 0$ and by uniqueness of the expectation equations we have $f(t,x) = -{\sf{d}}\mathbb{E} [p - |\u|^2/2-g\theta y] + g\E{\theta}\mb{\hat{y}} \equiv 0$.
We can then deduce in a similar fashion that $u_t \equiv 0$ if $(u_0,\theta_0) \equiv (0,0)$.

\bibliography{biblio}
\bibliographystyle{alpha}

\appendix
\section{Calculation of the Stratonovich-to-It\^o correction term in Proposition \ref{covariance-eq}} \label{strat-ito-correction}
In the proof of Proposition \ref{covariance-eq}, we converted the Stratonovich integral (here, we are taking the number of noise fields to be $N=1$ for simplicity)
\begin{align} \label{strat-int}
    \int^t_0 \left(\left(\LL_{\xi} a\right) \otimes a' + a' \otimes \left(\LL_{\xi} a\right)\right) \circ \diff W_s,
\end{align}
into an It\^o integral. Here, we will show how this is done for readers unfamiliar with stochastic calculus.

Consider a general stochastic process
\begin{align} \label{ito-process}
\diff X_t = \mu_t \diff t + \sigma_t \circ \diff W_t.
\end{align}
Then the Stratonovich integral $\int X_t \circ \diff W_t$ can be made into an It\^o integral by adding a cross-variance correction term,
\begin{align*}
    \int^t_0 X_s \circ \diff W_s = \int^t_0 X_s \diff W_s + \frac12 \left[X_\cdot, W_\cdot\right]_t,
\end{align*}
where, given a partition $0 = t_0 < t_1 <\ldots <t_N = 1$ with mesh size $\Delta t := \sup_{i \in [0,N]} |t_{i+1} - t_i|)$, the cross-variance is defined as the stochastic limit
\begin{align*}
    \left[X_\cdot, W_\cdot\right]_t := \lim_{\substack{\Delta t \rightarrow 0 \\ N\rightarrow \infty}} \sum_{i=1}^{N-1} (X_{t_{i+1}} - X_{t_i})(W_{t_{i+1}} - W_{t_i}),
\end{align*}
where the convergence is in probability. For processes of the form \eqref{ito-process}, one can check that the cross-variation simply reads
\begin{align} \label{cross-variance}
    \left[X_\cdot, W_\cdot\right]_t = \int^t_0 \sigma_s \diff s.
\end{align}

Now, let us apply this result to our Stratonovich integral \eqref{strat-int}. First we have that
\begin{align*}
    &\int^t_0 \left(\left(\LL_{\xi} a\right) \otimes a' + a' \otimes \left(\LL_{\xi} a\right)\right) \circ \diff W_s \\
    &\quad = \int^t_0 \left(\left(\LL_{\xi} a\right) \otimes a' + a' \otimes \left(\LL_{\xi} a\right) \otimes\right) \diff W_s + \frac12 \left(\left[\left(\LL_{\xi} a\right) \otimes a', W_{\cdot}\right]_t + \left[a' \otimes \left(\LL_{\xi} a\right), W_{\cdot}\right]_t\right),
\end{align*}
where $\left[\left(\LL_{\xi} a\right) \otimes a', W_{\cdot}\right]_t + \left[a' \otimes \left(\LL_{\xi} a\right), W_{\cdot}\right]_t$ is the cross-variance between the integrand of the Stratonovich integral \eqref{strat-int} with $W_t$.

Now, taking the Lie derivative $\LL_\xi$ on both sides of equation \eqref{a-trans-eq}, we get
\begin{align} \label{lie-der-a}
    \diff \,(\mathcal L_\xi a) + \mathcal L_\xi \mathcal L_{\E{u}} a \diff t + \LL_\xi (\LL_\xi a) \circ \diff W_t  = 0.
\end{align}
Using \eqref{lie-der-a}, \eqref{a-fluctuation-eq} and the stochastic Leibniz rule $\diff \,(S \otimes T) = (\circ \diff S) \otimes T + S \otimes (\circ \diff T)$, we get
\begin{align*}
    \diff (\left(\LL_{\xi} a\right) \otimes a') = -\left((\mathcal L_\xi \mathcal L_{\E{u}} a)\otimes a' + \left(\LL_{\xi} a\right) \otimes (\LL_{\E{u}} a')\right)\diff t - \left(( \LL_\xi \LL_\xi a) \otimes a' + (\LL_\xi a) \otimes (\LL_\xi a)\right)\circ \diff W_t.
\end{align*}
and similarly,
\begin{align*}
    \diff (a' \otimes \left(\LL_{\xi} a\right)) = -\left((\LL_{\E{u}} a') \otimes (\LL_{\xi} a) + a' \otimes (\mathcal L_\xi \mathcal L_{\E{u}} a)\right)\diff t - \left((\LL_\xi a) \otimes (\LL_\xi a) + a' \otimes  (\LL_\xi \LL_\xi a)\right)\circ \diff W_t.
\end{align*}
Then by \eqref{cross-variance}, we can conclude that
\begin{align*}
    &\left[\left(\LL_{\xi} a\right) \otimes a', W_{\cdot}\right]_t = -\int^t_0 \left(\mathcal L_{\xi}(\mathcal L_{\xi} a) \otimes a' + \left(\mathcal L_{\xi} a\right) \otimes \left(\mathcal L_{\xi} a\right)\right) \diff s, \\
    &\left[a' \otimes \left(\LL_{\xi} a\right), W_{\cdot}\right]_t = -\int^t_0 \left(\left(\mathcal L_{\xi} a\right) \otimes \left(\mathcal L_{\xi} a\right) + a' \otimes \mathcal L_{\xi}(\mathcal L_{\xi} a) \right)\diff s.
\end{align*}
Therefore, the Stratonovich integral \eqref{strat-int} in It\^o form reads.
\begin{align*}
    -\frac12 \int^t_0 \left(\left(\mathcal L_{\xi}(\mathcal L_{\xi} a)\right) \otimes a' + 2\left(\mathcal L_{\xi} a\right) \otimes \left(\mathcal L_{\xi} a\right) + a' \otimes \mathcal L_{\xi}(\mathcal L_{\xi} a)\right) \diff s + \int^t_0 \left(\left(\LL_{\xi} a\right) \otimes a' + a' \otimes \left(\LL_{\xi} a\right) \otimes\right) \diff W_s.
\end{align*}

\section{$p$-th central moments for advected scalar fields.} \label{p-th-moment-scalar}
Here, we will show how to derive the $p$-th central moment equation for scalar fields \eqref{pth-moment} in Remark \ref{p-th-moment-remark}. Consider the stochastic advection equation \eqref{a-trans-eq} for a scalar field $a$ and let $a' := a-\E{a}$, which satisfies \eqref{a-fluctuation-eq}.
Using It\^o' formula, one can check that
\begin{align}\label{a-p-fluc-eq}
    \begin{split}
    &\diff \,(a')^p + \mathcal L_{\E{u}}(a')^p\diff t + \displaystyle\sum_k p(a')^{p-1} \mathcal L_{\xi_k} a \diff W_t^k \\
    &= \frac{p}{2} \displaystyle\sum_k \left((p-1)(a')^{p-2}(\mathcal L_{\xi_k} a)^2 + (a')^{p-1} \mathcal L_{\xi_k}\mathcal L_{\xi_k} a'\right)\diff t.
    \end{split}
\end{align}
Now, by the Leibniz property of the Lie derivative, we have
\begin{align*}
    &(a')^{p-1} \mathcal L_{\xi_k}\mathcal L_{\xi_k} a' = \frac1p \mathcal L_{\xi_k}\mathcal L_{\xi_k} (a')^p - (p-1)(a')^{p-2} (\mathcal L_{\xi_k} a')^2 \\
    &=\frac1p \mathcal L_{\xi_k}\mathcal L_{\xi_k} (a')^p - (p-1)(a')^{p-2} (\mathcal L_{\xi_k} a)^2 + 2(p-1)(a')^{p-2} (\mathcal L_{\xi_k} a)(\mathcal L_{\xi_k} \E{a}) \\
    &\qquad - (p-1)(a')^{p-2} (\mathcal L_{\xi_k} \E{a})^2,
\end{align*}
which we can substitute in the last term of \eqref{a-p-fluc-eq} to get
\begin{align}\label{a-p-fluc-eq-2}
    \begin{split}
    &\diff \,(a')^p + \mathcal L_{\E{u}}(a')^p\diff t + \displaystyle\sum_k p(a')^{p-1} \mathcal L_{\xi_k} a \diff W_t^k \\
    &= \frac{p}{2} \displaystyle\sum_k \left(\frac1p \mathcal L_{\xi_k}\mathcal L_{\xi_k} (a')^p + 2 (\mathcal L_{\xi_k} (a')^{p-1})(\mathcal L_{\xi_k} \E{\omega}) + (p-1)(a')^{p-2} (\mathcal L_{\xi_k} \E{a})^2\right)\diff t.
    \end{split}
\end{align}

Finally, taking expectations on both sides of \eqref{a-p-fluc-eq-2} give us an iterative PDE for the $p$-th central moment of the advected scalar field $a$:
\begin{align*}
    \begin{split}
    \partial_{t} A^{(p)} + \mathcal L_{\E{u}}A^{(p)} &= \displaystyle\sum_k \left(\frac12 \mathcal L_{\xi_k}\mathcal L_{\xi_k} A^{(p)} + p\left(\mathcal L_{\xi_k} A^{(p-1)}\right)\left(\mathcal L_{\xi_k} \E{a}\right) + \frac{p(p-1)}{2}A^{(p-2)} \left(\mathcal L_{\xi_k} \E{a}\right)^2\right),
    \end{split}
\end{align*}
where $A^{(p)} := \E{(a-\E{a})^p}$.
\end{document}